\newtheorem{thm}{Theorem}
\newtheorem{lemma}{Lemma}
\newtheorem{proposition}{Proposition}
\newtheorem{remark}{Remark}
\def\AIR{{\sc air}}
\def\FIRE{{\sc Fire}}
\def\Reduze{{\tt Reduze}}
\begin{document}
\title{Integration-by-parts identities from the viewpoint of differential 
  geometry}
\author{Yang Zhang}
\affiliation{
Niels Bohr International Academy and Discovery Center, Niels Bohr Institute, \\
University of Copenhagen, Blegdamsvej 17, DK-2100 Copenhagen, Denmark
}
\abstract{We present a new method to construct integration-by-part (IBP)
  identities from the viewpoint of
  differential geometry. Vectors for generating IBP identities are
reformulated as differential forms, via Poincar\'{e} duality. Using the
tools of differential geometry and commutative algebra, we can efficiently find differential forms which
generate on-shell IBP relation without doubled propagator. Various $D=4$ two-loop
examples are presented. }
\maketitle

\section{Introduction}
With the successful run of the Large Hadron Collider (LHC), there is
an eager demand for the next-to-leading order (NLO) and
next-to-next-to-leading order (NNLO) background computation. NLO and
NNLO computations involve loop-order Feynman diagrams. The number
of Feynman integrals grows quickly for multi-leg and multi-loop cases. However, for each
diagram, many different Feynman integrals are linearly related by the
integration-by-parts (IBP) relations or symmetries, so the whole set
of integrals
can be reduced to a minimal set of integrals, so-called {\it master
  integrals} (MIs). This paper focuses on the geometric meaning for IBP
relations and provides a new method for obtaining IBP relations.

Schematically, for a $L$-loop integral, the integration of a total derivative vanishes and
resulting identity is called an IBP relation:    
\begin{equation}
  \int \frac{d^D l_1}{i \pi^{D/2}} \ldots \frac{d^D l_L}{i \pi^{D/2}}
  \sum_{i=1}^L \frac{\partial }{\partial l_i^\mu}\bigg(\frac{v_i^\mu}{D_1^{a_1}
    \ldots D_k^{a_k}}\bigg)=0.
\label{IBP}
\end{equation}
Here $v_i^\mu$ are vectors depends on externel and internal momenta.

Traditionally, various contributions to a certain
amplitude are characterized by Feynman diagrams, and the final results
are reduced to the form of MIs by IBP relations. In recent years, there are a lot of
new methods to improve the efficiency of multi-loop diagram computation,
and most of which also require the calculation of IBP identities at
certain steps. Unitarity methods \cite{Bern:1994zx,Bern:1994cg, Britto:2004nc} relate a loop amplitude to the product
of tree amplitudes, and the latter can be efficiently calculated by
recursive methods \cite{Britto:2004ap, Britto:2005fq}. 

For example,
Ossola-Papadopoulos-Pittau (OPP) method  \cite{Ossola:2006us,Ossola:2007ax,Giele:2008ve,Badger:2008cm,Ellis:2007br,Forde:2007mi}
determines the {\it minimal integrand basis} for one-loop Feynman diagrams
algebraically via partial fraction. This method has been successfully
generalized to multi-loop integrand level reduction by computational
algebraic geometry \cite{Mastrolia:2011pr,
  Badger:2012dp,Zhang:2012ce,Mastrolia:2012an,Badger:2012dv,Badger:2013gxa,Feng:2012bm,Mastrolia:2012wf,Mastrolia:2012du,Kleiss:2012yv,Huang:2013kh,Fazio:2014xea,
  vanDeurzen:2013saa, Mastrolia:2013kca,
Hauenstein:2014mda}. The coefficients of the minimal integrand are
therefore fixed by unitarity cuts. However, usually the integrand basis
is not the minimal integral basis, so finally the results are
reduced MIs by IBP relations. Multi-loop unitarity has also been systematically performed by the
maximal unitarity method \cite{Kosower:2011ty, CaronHuot:2012ab, Larsen:2012sx,
  Johansson:2012zv, Johansson:2013sda, Sogaard:2013yga,
  Sogaard:2013fpa,Sogaard:2014ila,Sogaard:2014oka} . Feynman
integrals are converted to contour integrals and MI
coefficients can be directly extracted from residue calculations. To get
the correct contour weights, in the intermediate step, IBP relations
are required \cite{Kosower:2011ty}.

For multi-loop or multi-leg diagrams, in general, the computation of
IBP is very heavy. For a given loop diagram, there are many IBP relations from
different choices of IBP-generating vectors $v_i^\mu$ in (\ref{IBP}). The desired reduction of
Feynman integrals to
MIs can be achieved by Gaussian elimination of IBP relations, via
Laporta algorithm \cite{Laporta:2000, Laporta:2001}. This algorithm is
used for several sophisticated programs, like \AIR{}
\cite{AIR}, \FIRE{} \cite{FIRE} and \Reduze{}
\cite{Reduze}. Furthermore, Laporta algorithm can be greatly sped up
by finite fields numerical sampling method \cite{vonManteuffel:2014ixa}. 
 
A breakthrough method for generating IBP relations by Gluza, Kajda and
Kosower (GKK method) \cite{Gluza:2010ws}, appeared in 2008. GKK method finds IBP relations of the
integrals without doubled propagator, so only a small portion of loop
integrals need to be considered. In practice, such IBP relations are
found by the careful choice of IBP generating vectors $v_i^\mu$ in (\ref{IBP}), via Syzygy
computation \cite{Gluza:2010ws}. Several two-loop diagrams' IBP relations are given by this
method. Furthermore, the syzygy computation can be simplified by
linear algebra techniques \cite{Schabinger:2011dz}. However, GKK method does not indicate the geometric meaning of such
IBP-generating vectors. It is an interesting question to ask if these
vectors have any particular meaning in the loop-momentum space. 

In our paper, we illustrate the geometric meaning of the IBP
generating vectors for integral without doubled propagator. We
reformulate such a vector as a differential form by Poincar\'{e} dual. 
\begin{equation}
      v_i^\mu  \Leftrightarrow \omega,
\end{equation}
where $\omega$ is a rank-$(DL-1)$ differential form. Then we
show that it is {\it locally} proportional to the differential form $\Omega = dD_1 \wedge \ldots \wedge d D_k$,
\begin{equation}
  \label{eq:78}
  \omega \mid_{\mathcal S} \ \propto \Omega  \mid_{\mathcal S},
\end{equation}
where $D_i$'s are the sets of all denominators of the Feynman
integral and $\mathcal S$ is the unitarity cut solution. Geometrically, $\omega$ is along the normal
direction of the unitarity-cut surface. 

Furthermore, we design a geometric method to generate IBP
identities without doubled propagator. We consider the {\it primary
decomposition} of the unitarity cut solutions, 
\begin{equation}
  \label{eq:62}
  \mathcal S=\bigcup_{i=1}^n \mathcal S_i.
\end{equation}
By solving congruence equations, we construct differential form
$\omega_i$'s which is nonzero and proportional to $\Omega$ in
$\mathcal S_i$,
but vanishes on other branches,
\begin{equation}
  \label{eq:76}
\left \{
  \begin{array}{c}
     \omega_i |_{\mathcal S_i} = \ (\alpha \wedge \Omega)
  |_{\mathcal S_i}  \\
 \omega_i |_{\mathcal S_j} = 0  |_{\mathcal S_j} ,\quad j \not= i
 \end{array}
\right . ,
\end{equation}
where $\alpha$ is an arbitrary non-zero $(DL-1-k)$-form. 
We use such $\omega_i$'s to generate the on-shell part of the IBP relations
without doubled propagator. Several two-loop four-point
and five-point examples are tested by our method.

This paper is organized as follows: in section \ref{IBP_diff}, we
reformulate IBP identities in terms of differential forms, and the
condition for IBP without doubled propagator is also reformulated. 
In section \ref{geometry}, we illustrate the geometric meaning of the
IBP-generating differential forms and present a new method for generating
the on-shell part of IBPs. In section \ref{examples}, several two-loop
examples based on our algorithm are given. 

\section{Integration-by-Parts identities in the formalism of
  differential form}
\label{IBP_diff}
We consider the $L$-loop Feynman integral,
\begin{equation}
  \label{integral}
I_{\{a_1, \ldots a_k\}}[N]=  \int \frac{d^D l_1}{i \pi^{D/2}} \ldots \frac{d^D l_L}{i \pi^{D/2}} \frac{N}{D_1^{a_1}
    \ldots D_k^{a_k}}.
\end{equation}
where $N$ is a polynomial in loop momenta. The integrand reduction and
unitarity solution structure has been studied by algebraic geometry
methods \cite{Zhang:2012ce, Mastrolia:2012an}. In the following
discussion, we will frequently use these algebraic geometry
methods. The mathematical notations are summarized in the Appendix and
the algebraic geometry reference is \cite{MR0463157}.



We find that it is convenient to rewrite IBP relations (\ref{IBP}) in terms of
differential forms. By Poincar\'{e} dual, the $(D\cdot L)$-dimensional vector $v_i^\mu$ is dual to a $D
\cdot L-1$ differential form $\omega$. Explicitly, 
\begin{equation}
  \label{Poincare_duality}
  \omega_{i_1 \ldots i_{(DL-1)}} \equiv \epsilon_{i_1 \ldots i_{(DL-1)}
    i_{DL}} v^{i_{DL}}  ,
\end{equation}
where $\epsilon_{i_1 \ldots i_{(DL-1)}
    i_{DL}}$ is the Levi-Civita symbol. In most of the following discussion, we
use the notations of differential forms, since it is convenient to
write down the exterior derivative and wedge products.  
We call a differential form {\it polynomial-valued}, if all the
components are polynomials in loop momenta, in the momentum-coordinate
basis. Note that this definition is consistent with linear
transformation of loop momenta.

The total derivative in (\ref{IBP}) can be dually written as, 
\begin{equation}
  \frac{\partial }{\partial l_i^\mu}\bigg(\frac{v_i^\mu}{D_1^{a_1}
    \ldots D_k^{a_k}} \bigg) \Leftrightarrow d\bigg(\frac{\omega}{D_1^{a_1}
    \ldots D_k^{a_k}}\bigg).
\label{eq:3}
  \end{equation}
So the IBP relation is 
\begin{equation}
  \label{IBP_dual}
  \int \frac{d\omega}{D_1^{a_1}
    \ldots D_k^{a_k}} -  \sum_{i=1}^k a_i\int \frac{dD_i\wedge \omega}{D_1^{a_1}
    \ldots D_i^{a_i+1} \ldots D_k^{a_k}}=0.
\end{equation}

Different choices of $v_i^\mu$, or $\omega$ lead
to different IBPs. One particularly interesting class of IBPs is {\it
  IBPs without doubled propagator}, which is described in the next subsection.

\subsection{IBPs without doubled propagator}
For a Feynman integral from Feynman rules, the powers of the
denominators $D_1,\ldots D_k$ in (\ref{integral}) are usually one or zero,  i.e., $a_i=0,1$, $i=1, \ldots k$.
We call such an integral, {\it integral without doubled propagator}.  We are interested in {\it IBPs without doubled propagators}, which is an IBP
whose teams are integrals without doubled propagator. 

We make an ansatz for an IBP without doubled propagator,
\begin{equation}
  \label{Ansatz}
  \int d\bigg(\frac{\omega}{D_1
    \ldots D_k}\bigg)=0,
\end{equation}
where $\omega$ is a polynomial-valued $(DL-1)$-form. Usually, the expansion of (\ref{IBP}) contains integrals with double
propagators, because,
\begin{equation}
  \label{eq:4}
  d\bigg(\frac{1}{D_i}\bigg) =-\frac{dD_i}{D_i^2}.
\end{equation}
However, a particular choice of $\omega$ can remove the
double power if,
\begin{equation}
  \label{IBP_without_double_propagator}
  dD_i \wedge \omega=f_i D_i dl_1^0\wedge \ldots \wedge d
  l_L^{D-1},\quad 
  i=1,\ldots j
\end{equation}
where $f_i$ is a polynomial.

\subsection{On-shell part of IBPs}
Sometimes we only focus on Feynman diagrams without pinched legs,
i.e., $a_i\geq 1, i=1,\ldots k$. We call the corresponding integrals {\it leading
  integrals}. On the other hand, we call integrals with at least one
$a_i<1$ {\it simpler integrals}. If we only keep the leading
integrals in an IBP relation, then the resulting formula
\begin{equation}
  \label{eq:6}
  \sum_{i} c_i I_{a_{i,1},\ldots a_{i,k}}[N_i]+ \ldots =0, 
\end{equation}
is called an {\it on-shell IBP relation}. $a_{i,j}>0, \forall i,j $.  Here ``$\ldots$'' denotes the
{\it simpler integrals}, and $N_i$'s are polynomial numerators. 

In this paper, we consider the on-shell IBP without double
propagators, namely,
\begin{equation}
  \label{IBP1}
  \sum_{i} c_i I_{1,\ldots 1}[N_i]+ \ldots =0,
\end{equation}

For the ansatz (\ref{Ansatz}) to generate an on-shell IBP without
doubled propagator, it is sufficient that,
\begin{equation}
  \label{on_shell_IBP_without_double_propagator}
  dD_i \wedge \omega=\sum_j f_{ij} D_j dl_1^0\wedge \ldots \wedge d l_L^{D-1},\quad 
  i=1,\ldots j
\end{equation}
where each $f_{ij}$ is a polynomial.  $\omega$ generates the IBP,
\begin{eqnarray}
  \label{eq:9}
 0= \int d\big( \frac{\omega}{D_1 \ldots D_k} \big) = \int \frac{d\omega}{D_1
    \ldots D_k}-\sum_{i=1}^k \sum_{j=1}^k \int \frac{ f_{ij} D_j
    dl_1^0\wedge \ldots \wedge d l_L^{D-1}}{D_1 \ldots D_i^2 \ldots
    D_k} ,
\end{eqnarray}
Pick up the on-shell part, we have 
\begin{equation}
\label{on_shell}
   0= \int \frac{d\omega}{D_1
    \ldots D_k}-\sum_{i=1}^k  \int \frac{ f_{ii}  dl_1^0\wedge \ldots
    \wedge d l_L^{D-1}}{D_1 \ldots D_k} + ... ,
\end{equation}
where $\ldots$ stands for simpler integrals. Note that this condition
(\ref{on_shell_IBP_without_double_propagator}) is weaker than the
condition (\ref{IBP_without_double_propagator}).

Furthermore, from (\ref{on_shell}), we have the following lemma,
\begin{lemma}
 If if all components of $\omega$ are
in the ideal $I=\langle D_1, \ldots D_k \rangle$, then it generates an IBP identity whose on-shell part is trivial.
 \end{lemma}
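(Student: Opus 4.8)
The plan is to exploit the hypothesis to write $\omega$ itself as a denominator‑weighted combination of polynomial‑valued forms, and then simply read off which terms of the resulting identity are leading integrals. Since each component of $\omega$ in the momentum‑coordinate basis lies in $I=\langle D_1,\ldots,D_k\rangle$, I would first choose, component by component, a decomposition
\begin{equation}
  \label{eq:omega-decomp}
  \omega=\sum_{j=1}^{k} D_j\,\eta_j ,
\end{equation}
where the $\eta_j$ are polynomial‑valued $(DL-1)$‑forms. Such a decomposition exists by the very definition of ideal membership; it is not unique, but any choice will serve, since the conclusion only concerns the on‑shell part.

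Next I would substitute (\ref{eq:omega-decomp}) into the ansatz (\ref{Ansatz}), using (\ref{IBP_dual}) together with the Leibniz rule $d\omega=\sum_j\bigl(dD_j\wedge\eta_j+D_j\,d\eta_j\bigr)$. Writing the top forms as $dD_i\wedge\eta_j=f_{ij}\,\mu$ with $\mu=dl_1^0\wedge\ldots\wedge dl_L^{D-1}$, one finds $dD_i\wedge\omega=\sum_j f_{ij}D_j\,\mu$, so $\omega$ automatically satisfies (\ref{on_shell_IBP_without_double_propagator}) and formula (\ref{on_shell}) applies, with the diagonal polynomials $f_{ii}$ fixed by $dD_i\wedge\eta_i=f_{ii}\,\mu$.

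Then I would classify the terms of (\ref{on_shell}). Any term whose numerator contains a factor $D_j\,d\eta_j$ becomes, after cancelling $D_j$, an integral with $a_j=0$, hence a simpler integral. In $dD_i\wedge\omega=\sum_j f_{ij}D_j\,\mu$ the pieces with $j\neq i$ carry an explicit $D_j$ over $D_1\ldots D_i^2\ldots D_k$, so they again have $a_j=0$ and are simpler, while the $j=i$ piece cancels the squared $D_i$ and contributes exactly $I_{1,\ldots,1}[f_{ii}]$. Hence the only leading‑integral contributions to (\ref{on_shell}) are $\sum_j I_{1,\ldots,1}[f_{jj}]$, coming from the $\sum_j dD_j\wedge\eta_j$ part of $\int d\omega/(D_1\ldots D_k)$, and $-\sum_i I_{1,\ldots,1}[f_{ii}]$, coming from the subtracted $j=i$ terms. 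By linearity of $N\mapsto I_{1,\ldots,1}[N]$ these two sums are equal, so they cancel and the on‑shell part of the IBP reduces to $0=(\text{simpler integrals})$, i.e.\ it is trivial.

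The argument is essentially bookkeeping, and the main point to get right is that the two groups of leading‑integral terms are governed by the \emph{same} polynomials $f_{jj}$ (defined by $dD_j\wedge\eta_j=f_{jj}\,\mu$), so that the cancellation is exact rather than merely ``modulo simpler integrals''. One should also confirm, as already noted, that the conclusion does not depend on the particular decomposition (\ref{eq:omega-decomp}); and if one prefers not to route through (\ref{on_shell}), the same classification applied directly to the expansion (\ref{IBP_dual}) gives the result just as quickly.
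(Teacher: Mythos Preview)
Your argument is correct and rests on the same decomposition $\omega=\sum_j D_j\,\eta_j$ that the paper uses. The only difference is in the bookkeeping: the paper cancels the explicit $D_j$ against the denominator \emph{before} taking the exterior derivative, observing that
\[
\int d\!\left(\frac{\omega}{D_1\cdots D_k}\right)
=\sum_j \int d\!\left(\frac{\eta_j}{D_1\cdots \widehat{D_j}\cdots D_k}\right),
\]
so by linearity the IBP is a sum of IBPs each of which is built from an integrand missing one propagator, and hence every resulting term is a simpler integral with no cancellation to track. Your route computes $d\omega$ first, identifies the diagonal polynomials $f_{jj}$, and then exhibits an explicit cancellation between the $d\omega$ contribution and the subtracted term in (\ref{on_shell}); this is perfectly valid but does a little more work than necessary. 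You essentially acknowledge this yourself in your final sentence, and that alternative is precisely what the paper does.
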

 \begin{proof}
   Let $\omega'=\sum_{i=1}^m w_i dx_1\wedge \ldots
   \wedge\hat{dx_i}\wedge \ldots \wedge dx_m$, where $m=LD$ and $\{x_1,\ldots
     x_m\}$ denote the loop momenta $\{l_1^0 ,\ldots
     l_L^{D-1}\}$. Suppose that every $w_i$ is in $I$, i.e., $w_i=\sum_{j=1}^k
     g_{ij} D_j$. Hence,
     \begin{eqnarray}
       \label{eq:18}
    0 &=&  \int d\big(\frac{\omega}{D_1 \ldots D_k}\big)=\sum_{i=1}^m \sum_{j=1}^k\int
    d\big(\frac{ g_{ij} D_j dx_1\wedge \ldots
   \wedge\hat{dx_i}\wedge \ldots \wedge dx_m}{D_1 \ldots
   D_k}\big)\nonumber \\
&=& \sum_{i=1}^m \sum_{j=1}^k\int
    d\big(\frac{ g_{ij}  dx_1\wedge \ldots
   \wedge\hat{dx_i}\wedge \ldots \wedge dx_m}{D_1 \ldots
   \hat{D_j}\ldots D_k}\big) .
     \end{eqnarray}
From the expansion of the expression, it is clear that each term
misses one of the denominators.  Therefore, $\omega'$ generates the
IBP,
\begin{equation}
  \label{eq:13}
  0=0 + \ldots ,
\end{equation}
where $\ldots$ stands for simpler integrals. The on-shell part is trivial.
 \end{proof}
From this lemma, if two rank-$DL-1$ forms $\omega_1$ and $\omega$
differ by such an $\omega'$, then $\omega_1$ and $\omega_2$ generate
the same on-shell IBP. 
If an $\omega$
satisfying (\ref{on_shell_IBP_without_double_propagator}), then 
$f \omega$ also
satisfies (\ref{on_shell_IBP_without_double_propagator}). Here $f$ is
a polynomial in loop momenta. So we can obtain more IBPs without
doubled propagator, by multiplying various $f$'s. Note that by Lemma 1, only when $f$ is a
polynomial in {\it irreducible scalar products}, the resulting $f
\omega$ generates a non-trivial on-shell IBP.

\section{A method to construct on-shell IBPs without doubled propagator}
\label{geometry}

We reformulate  (\ref{on_shell_IBP_without_double_propagator}) from the
viewpoint of algebraic geometry, and then illustrate how to find the
solution to  (\ref{on_shell_IBP_without_double_propagator}) with
computational algebraic geometry method.

\subsection{A condition for on-shell IBPs without doubled propagator}
With the background of algebraic geometry, we can reformulate the
condition (\ref{on_shell_IBP_without_double_propagator}) as the
differential geometry constraint in 
Proposition \ref{lemma_product}.
\begin{proposition}
\label{vanishing_lemma}
For an $\omega$ in (\ref{Ansatz}) to generate an on shell IBP without doubled propagator,
it is necessary that for each point on the cut solution, at the
corresponding cotangent space,
\begin{equation}
  \label{vanishing_id}
  (dD_i \wedge \omega)|_P=0, \quad \forall P \in \mathcal Z(I).
\end{equation}
If the ideal generate by the denominators is radical, then this
condition is also sufficient.
\end{proposition}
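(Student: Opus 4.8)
The plan is to translate the condition \eqref{on_shell_IBP_without_double_propagator} into the statement that each $dD_i\wedge\omega$, viewed as a top-degree form with polynomial coefficient, lies in the ideal $I=\langle D_1,\ldots,D_k\rangle$, and then invoke the Nullstellensatz-type correspondence between polynomials vanishing on $\mathcal Z(I)$ and membership in (the radical of) $I$. Concretely, write $dD_i\wedge\omega = h_i\, dl_1^0\wedge\cdots\wedge dl_L^{D-1}$ for a unique polynomial $h_i$; then \eqref{on_shell_IBP_without_double_propagator} is exactly the assertion $h_i\in I$ for all $i$. The geometric condition \eqref{vanishing_id} says precisely that $h_i$ vanishes at every point of $\mathcal Z(I)$, i.e. $h_i\in \mathbf I(\mathcal Z(I))$.

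First I would establish necessity. If $\omega$ generates an on-shell IBP without doubled propagator, then by the derivation leading to \eqref{on_shell} the doubled-propagator integrals must cancel at the level of the integrand on the cut, which forces each $h_i$ to be expressible as $\sum_j f_{ij}D_j$; restricting to any $P\in\mathcal Z(I)$ kills every $D_j$, so $h_i(P)=0$, and hence $(dD_i\wedge\omega)|_P=0$ in the cotangent space at $P$. This direction is essentially unwinding definitions together with the observation that the $1$-forms $dD_1,\dots,dD_k$ and the top form are the natural generators to expand against. For sufficiency, assume the ideal $I$ is radical. Then $\mathbf I(\mathcal Z(I)) = \sqrt{I} = I$ by the strong Nullstellensatz, so $h_i$ vanishing on $\mathcal Z(I)$ immediately gives $h_i\in I$, i.e. $h_i=\sum_j f_{ij}D_j$ with polynomial $f_{ij}$, which is exactly \eqref{on_shell_IBP_without_double_propagator}; the computation \eqref{eq:9}--\eqref{on_shell} then shows $\omega$ produces an on-shell IBP without doubled propagator.

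I would also be careful about the base field: the Nullstellensatz requires an algebraically closed field, so the argument should be phrased over $\mathbb C$ (or $\overline{\mathbb Q}$), with $\mathcal Z(I)$ meaning the affine variety over the algebraic closure, as is standard for unitarity cut solutions; the polynomial identities obtained are then valid over $\mathbb Q$ since the $D_j$ have rational coefficients and ideal membership can be tested over $\mathbb Q$. A secondary subtlety is the uniqueness of the coefficient $h_i$: expanding $dD_i\wedge\omega$ in the monomial basis $\{dx_1\wedge\cdots\wedge\widehat{dx_s}\wedge\cdots\wedge dx_m\}\wedge$-ed appropriately collapses to a single top-form component, and one should note this is basis-independent up to a nonzero constant under linear changes of loop momenta, consistent with the notion of \emph{polynomial-valued} form already introduced.

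The main obstacle I anticipate is not in the algebra but in pinning down exactly what "necessary" means here when $I$ is not radical: if $I\subsetneq\sqrt{I}$, then $h_i$ could vanish on $\mathcal Z(I)$ without lying in $I$, so \eqref{vanishing_id} genuinely is only necessary, and one must resist the temptation to claim more. The cleanest way to handle this is to keep the two implications strictly separate — necessity uses only $I\subseteq\mathbf I(\mathcal Z(I))$ (trivial), while sufficiency uses the reverse inclusion $\mathbf I(\mathcal Z(I))\subseteq I$, which holds iff $I$ is radical — and to make explicit in the proof where radicality is the sole extra hypothesis invoked.
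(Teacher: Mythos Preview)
Your proposal is correct and follows essentially the same route as the paper: write $dD_i\wedge\omega = F_i\,dl_1^0\wedge\cdots\wedge dl_L^{D-1}$, observe that necessity amounts to $F_i\in I\Rightarrow F_i|_{\mathcal Z(I)}=0$, and obtain sufficiency from Hilbert's Nullstellensatz together with the radicality hypothesis. The paper's proof is terser, omitting your remarks on the base field and on the non-radical case, but the logical skeleton is identical.
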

\begin{proof}
 By the definition, all $D_i$ vanish on $\mathcal S=\mathcal
 Z(I)$. So $\forall P\in Z(I)$, $(dD_i \wedge \omega)|_P=0$. 
On the other hand, 
\begin{equation}
  \label{eq:8}
   (dD_i \wedge \omega)=F_i dl_1^0\wedge \ldots \wedge d l_L^{D-1},\quad 
  i=1,\ldots k
\end{equation}
where each $F_i$ is a polynomial. (\ref{vanishing_id}) means that
$F_i$ vanish everywhere on $\mathcal S$. So by Hilbert's Nullstenllensatz,
$F_i\in \sqrt I$. If $I$ is radical, then $F_i\in I$ and so $F_i=\sum_j f_{ij} D_j$.
\end{proof}

To get some insights of (\ref{vanishing_id}), we consider the
cotangent space at $P$. We consider general case, 
for which the cut equation system is non-degenerate, i.e., 
\begin{equation}
  \label{eq:10}
  \dim \mathcal S_i=DL-k, \quad i=1, \ldots n
\end{equation}
where $k$ is the number of denominators. If $P$ is a {\it non-singlar
  point}, i.e., the Jacobian
\begin{equation}
  \label{eq:11}
  J=\det \bigg(\frac{\partial D_i}{\partial x_j}\bigg)|_P .
\end{equation}
has the rank $k$, then it is clearly that 
\begin{equation}
  \label{eq:12}
  (d D_1 \wedge \ldots \wedge d D_k)|_P \not=0.
\end{equation}
Therefore we have the following proposition,
\begin{proposition}
\label{lemma_product}
If $k\leq DL-1$ and all cut solutions have the dimension $DL-k$, for an $\omega$ in (\ref{Ansatz}) to generate an on shell IBP without doubled propagator,
it is necessary that for each non-singular point $P$ on the cut solution,
at the cotangent space,
\begin{equation}
  \omega|_P=(\alpha \wedge D_1 \wedge \ldots \wedge D_k )|_P .
\end{equation}
where $\alpha$ is a $(DL-k-1)$ form.
\end{proposition}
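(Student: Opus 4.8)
The plan is to work entirely at a fixed non-singular point $P$ on the cut solution, where the problem becomes purely linear algebra in the cotangent space. First I would note that by Proposition \ref{vanishing_lemma} the necessary condition on $\omega$ is $(dD_i \wedge \omega)|_P = 0$ for all $i = 1, \ldots, k$. So everything reduces to the following statement: in a vector space $V$ of dimension $m = DL$ (the cotangent space $T_P^*$, with $\omega$ living in $\bigwedge^{m-1} V$), if $\beta_1 = dD_1|_P, \ldots, \beta_k = dD_k|_P$ are linearly independent $1$-forms (which is exactly what the Jacobian having rank $k$ guarantees), then any $(m-1)$-form $\omega$ with $\beta_i \wedge \omega = 0$ for every $i$ must be divisible by $\beta_1 \wedge \cdots \wedge \beta_k$, i.e. $\omega = \alpha \wedge \beta_1 \wedge \cdots \wedge \beta_k$ for some $(m-1-k)$-form $\alpha$.

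To prove this linear-algebra fact I would complete $\beta_1, \ldots, \beta_k$ to a basis $\beta_1, \ldots, \beta_m$ of $V$ and expand $\omega$ in the induced basis of $\bigwedge^{m-1} V$, whose elements are the "hat" forms $\beta_{\widehat{1}} := \beta_2 \wedge \cdots \wedge \beta_m$, $\beta_{\widehat{2}} := \beta_1 \wedge \beta_3 \wedge \cdots \wedge \beta_m$, and so on, with $\beta_{\widehat{j}}$ omitting $\beta_j$. Writing $\omega = \sum_{j=1}^m c_j\, \beta_{\widehat{j}}$, I would compute $\beta_i \wedge \omega$: the only surviving term is $c_i\, \beta_i \wedge \beta_{\widehat{i}} = \pm c_i\, \beta_1 \wedge \cdots \wedge \beta_m$. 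Hence $\beta_i \wedge \omega = 0$ forces $c_i = 0$ for each $i = 1, \ldots, k$. Therefore $\omega = \sum_{j=k+1}^m c_j\, \beta_{\widehat{j}}$, and each remaining basis form $\beta_{\widehat{j}}$ with $j > k$ contains the full block $\beta_1 \wedge \cdots \wedge \beta_k$ as a factor, so we may pull it out: $\omega = \left( \sum_{j=k+1}^m \pm c_j\, \beta_{k+1} \wedge \cdots \wedge \widehat{\beta_j} \wedge \cdots \wedge \beta_m \right) \wedge \beta_1 \wedge \cdots \wedge \beta_k$, which is exactly the claimed form with $\alpha$ the parenthesized $(m-1-k)$-form. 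The hypothesis $k \le DL - 1$ is what makes $m - 1 - k \ge 0$ so that $\alpha$ has non-negative rank.

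Finally I would translate back: since $\beta_i = dD_i|_P$ and $dl_1^0 \wedge \cdots \wedge dl_L^{D-1}$ is, up to the nonzero Jacobian factor $J$, equal to $\beta_1 \wedge \cdots \wedge \beta_m$, the decomposition above is precisely $\omega|_P = (\alpha \wedge dD_1 \wedge \cdots \wedge dD_k)|_P$, matching the statement (I would read the proposition's "$D_1 \wedge \ldots \wedge D_k$" as shorthand for "$dD_1 \wedge \ldots \wedge dD_k$"). I do not expect a serious obstacle here; the main thing to be careful about is bookkeeping of signs and, more substantively, making explicit that non-singularity of $P$ is used in exactly two places — it guarantees the $\beta_i$ are independent (so they extend to a basis and the "only surviving term" argument works) and it guarantees $dD_1 \wedge \cdots \wedge dD_k|_P \ne 0$ so the conclusion is non-vacuous. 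If one wanted the statement on the cut surface rather than pointwise, the subtlety would be whether $\alpha$ can be chosen polynomial/regular, but the proposition as stated is pointwise in the cotangent space, so this does not arise.
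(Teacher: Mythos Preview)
Your proposal is correct and follows essentially the same approach as the paper: both extend $dD_1|_P,\ldots,dD_k|_P$ to a basis of the cotangent space (the paper phrases this as choosing local coordinates $y_1=D_1,\ldots,y_k=D_k$), expand $\omega$ in the induced top-minus-one basis, and observe that $dD_i\wedge\omega=0$ kills exactly the coefficients of the basis elements missing $dD_i$, so the surviving terms all contain $dD_1\wedge\cdots\wedge dD_k$. Your version is marginally cleaner in that it works purely linear-algebraically at a single cotangent space and does not invoke a local coordinate change, but the argument is the same.
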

\begin{proof}
  Since at the non-singular point $P$, the Jacobian is non-zero. So
  locally we can choose a coordinator system, $(y_1,\ldots y_{DL})$
  such that,
  \begin{equation}
    \label{eq:14}
    y_1 = D_1, \quad \ldots,\quad y_k=D_k.
  \end{equation}
Expand $\omega|_P$ in this coordinator system. If $\omega|_P$ contains a
component proportional to $dy_1 \wedge \ldots \hat{dy_i}\ldots \wedge
dy_n$ and $i\leq k$, then 
\begin{equation}
  \label{eq:16}
  (dD_i \wedge \omega)|_P\not =0 .
\end{equation}
This is a violation to Proposition \ref{vanishing_lemma}. Collecting all
terms  proportional to $dy_1 \wedge \ldots \hat{dy_i}\ldots \wedge
dy_n$ and $i> k$, this lemma is clear. 
\end{proof}

Generically, the singular points on $\mathcal S$ only form a subset with lower
dimension. So for ``almost all points'' on $\mathcal S$, $\omega$ is
proportional to $dD_1 \wedge \ldots \wedge dD_k$. We may have an 
explicit ansatz,
\begin{equation}
  \label{attempting_ansatz}
 \omega= \alpha\wedge dD_1 \wedge \ldots dD_k.
\end{equation}
Here $\alpha$ is a polynomial-valued differential form. This indeed generates an on-shell IBP relation without double
propagator. However, this form may not generate enough IBP relations,
since proposition \ref{vanishing_lemma} is only a local condition while
(\ref{attempting_ansatz}) has a global expression.

We may generalize (\ref{attempting_ansatz}) as: a polynomial-valued
differential form $\omega$ which {\it locally} has the form,
\begin{equation}
  \label{ansatz}
  \omega|_{\mathcal S_i} = \alpha_i \wedge dD_1 \wedge \ldots dD_k .
\end{equation}
on each branch $\mathcal S_i$. $\alpha_i$'s are different polynomial $(DL-k-1)$-froms on
different branches. Then there are two questions,
\begin{itemize}
\item Given a set of $\alpha_i$'s, does such a polynomial-valued $\omega$ exist?
\item Given a set of $\alpha_i$'s, is there an algorithm to find such
  an $\omega$?
\end{itemize}
These questions will be answered in the next section, explicitly in
Theorem \ref{congruence}, by solving {\it congruence equations}.

\subsection{Local form and congruence equations}
To study the behaviour of a differential form near the cut, we
use the tool of Gr\"obner basis and polynomial divisions. Recall that $I$ has the primary decomposition $I=I_1\cap ... \cap I_n$. Let $G(I)$ be the Gr\"obner basis of $I$, and $G(I_i)$
be the Gr\"obner basis of $I_i$. We denote the equivalent classes
$[\ ]$ and $[\ ]_i$
as,
\begin{eqnarray}
\ [f]&=&[g], \quad \text{if } f-g \in I, \\
\ [f]_i&=&[g]_i, \quad \text{if } f-g \in I_i.
\end{eqnarray}
Intuitively, these equivalent classes characterise the limit of the
polynomials approaching the cut manifold. In practise, the unique
representative for $[f]$ (or $[f]_i$) can be chose as the remainder
of the polynomial division of $f$ over $G(I)$ (or $G(I_i)$).

Here we generalize the equivalent classes to polynomial-valued
differential forms. Two differential forms $\alpha$ and $\beta$ are in
the same equivalent classes, if and only if $\alpha$ and $\beta$ are
of the same rank and all polynomial components are in the same
equivalent classes. We still use $[\ ]$ and $[\ ]_i$ for differential
forms.

Then we rewrite the condition (\ref{ansatz}) as,
\begin{eqnarray}
  \label{eq:15}
  [\omega]_i= [\alpha_i \wedge dD_1 \wedge \ldots dD_k]_i .
\end{eqnarray}
For a large classes of diagrams, given an arbitrary set of
$\alpha_i$'s, such differential form $\omega$ exists. We have the
following theorem,

\begin{thm}
\label{congruence}
Let $I=\langle D_1, \ldots D_k \rangle$ be an ideal in the ring
$\mathbb C[x_1,\ldots x_{m}]$. $I=I_1 \cap \ldots I_n$ is its primary
decomposition and $J_i=\cap_{j=1}^i I_i$. Suppose that (1) for each component $\dim \mathcal
Z(I_i)=m-k$， (2) Each $(J_i+I_{i+1})$ is a
radical ideal, $i=1, \ldots n-1$. Then given an arbitrary set of
rank-$(m-k-1)$ polynomial-valued forms, $\alpha_i$, there exists a rank-$(m-1)$ form $\omega$
such that,
\begin{eqnarray}
  \label{eq:15}
  [\omega]_i= [\alpha_i \wedge dD_1 \wedge \ldots \wedge dD_k]_i .
\end{eqnarray}
\end{thm}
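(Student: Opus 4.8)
The plan is to reduce the problem of constructing a single global form $\omega$ satisfying all the local congruences simultaneously to an iterative application of the Chinese Remainder Theorem for ideals in a polynomial ring, applied componentwise to the coefficients of the differential form. First I would set up notation: write $\beta_i := \alpha_i \wedge dD_1 \wedge \ldots \wedge dD_k$, a rank-$(m-1)$ polynomial-valued form, so the target is to find $\omega$ with $[\omega]_i = [\beta_i]_i$ for $i = 1, \ldots, n$. Since equivalence of forms in $[\ ]_i$ is defined componentwise and all $\beta_i$ have the same rank $m-1$, it suffices to solve, for each of the $\binom{m}{m-1} = m$ multi-index components separately, the system of scalar congruences $\omega^{(c)} \equiv \beta_i^{(c)} \pmod{I_i}$, $i = 1,\ldots,n$, in $\mathbb{C}[x_1,\ldots,x_m]$. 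So the whole theorem follows once I establish the scalar statement: given $g_1, \ldots, g_n \in \mathbb{C}[x_1,\ldots,x_m]$, there exists $h$ with $h \equiv g_i \pmod{I_i}$ for all $i$, under hypotheses (1) and (2).

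For the scalar statement I would argue by induction on $n$, building $h$ on $J_i = I_1 \cap \ldots \cap I_i$ step by step. The base case $n=1$ is trivial ($h = g_1$). For the inductive step, suppose $h_i$ satisfies $h_i \equiv g_j \pmod{I_j}$ for $j \le i$, equivalently $h_i \equiv g_j \pmod{J_i}$ in the appropriate sense; I want $h_{i+1}$ with $h_{i+1} \equiv h_i \pmod{J_i}$ and $h_{i+1} \equiv g_{i+1} \pmod{I_{i+1}}$. By the general form of CRT this is solvable precisely when $h_i - g_{i+1} \in J_i + I_{i+1}$, and then $h_{i+1}$ is unique modulo $J_i \cap I_{i+1} = J_{i+1}$. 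The obstruction $h_i - g_{i+1}$ need not lie in $J_i + I_{i+1}$ for arbitrary $g$'s — and this is exactly where hypotheses (1) and (2) enter. Here is the key point: $J_i + I_{i+1}$ is, by hypothesis (2), a radical ideal, so $J_i + I_{i+1} = \sqrt{J_i + I_{i+1}} = I(\mathcal{Z}(J_i) \cap \mathcal{Z}(I_{i+1}))$; a polynomial lies in it iff it vanishes on the overlap locus $\mathcal{Z}(J_i) \cap \mathcal{Z}(I_{i+1})$. Now $h_i$ already agrees with $g_j$ on each $\mathcal{Z}(I_j)$, $j \le i$, and we need $g_{i+1}$ to agree with $h_i$ on the overlap $\mathcal Z(J_i) \cap \mathcal Z(I_{i+1})$. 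This is not automatic unless the $\beta_i$ were themselves already compatible on overlaps.

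This means the correct statement to prove is not "for an arbitrary set of $\alpha_i$" literally for arbitrary scalar data, but rather that the forms $\beta_i = \alpha_i \wedge dD_1 \wedge \ldots \wedge dD_k$ automatically satisfy the overlap-compatibility condition because of the wedge factor $dD_1 \wedge \ldots \wedge dD_k$. The mechanism is dimensional: by hypothesis (1) each $\mathcal Z(I_i)$ has dimension $m-k$, so a pairwise intersection $\mathcal Z(I_i) \cap \mathcal Z(I_j)$ (with $i \ne j$, distinct primary components meeting in the radical $I_i + I_j$) generically has dimension strictly less than $m-k$, hence codimension $> k$; on the cotangent space at a smooth point of such an overlap the forms $dD_1, \ldots, dD_k$ satisfy at least one linear relation, forcing $dD_1 \wedge \ldots \wedge dD_k = 0$ there and therefore $\beta_i|_P = \beta_j|_P = 0$. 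So I would: (a) show $[\beta_i]_{I_i + I_{i+1}} = 0$, i.e. $\alpha_i \wedge dD_1 \wedge \ldots \wedge dD_k \in (I_i + I_{i+1}) \cdot (\text{forms})$, using that $dD_1 \wedge \ldots \wedge dD_k$ lies in the ideal of the overlap because the $D$'s become dependent there — this is really Proposition \ref{lemma_product} run on the overlap locus, combined with Hilbert's Nullstellensatz and the radicality from hypothesis (2) to pass from "vanishes on $\mathcal Z(I_i + I_{i+1})$" to "lies in $I_i + I_{i+1}$"; then inductively (b) show that the partial solution $h_i$, being an $I_j$-combination of things that are each $\equiv \beta_j$, still has $h_i - \beta_{i+1} \in J_i + I_{i+1}$ by checking vanishing on $\mathcal Z(J_i) \cap \mathcal Z(I_{i+1}) = \bigcup_{j \le i} (\mathcal Z(I_j) \cap \mathcal Z(I_{i+1}))$, where on each piece both $h_i$ (which equals $\beta_j$ modulo $I_j$) and $\beta_{i+1}$ vanish. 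The main obstacle, and the step deserving the most care, is exactly this verification that the wedge factor $dD_1 \wedge \ldots \wedge dD_k$ kills the forms on all pairwise overlaps — i.e. turning the dimension count "codimension of the overlap exceeds $k$" into the algebraic statement $dD_1\wedge\ldots\wedge dD_k \in \sqrt{I_i+I_{i+1}}\cdot(\text{forms})$, which needs the non-degeneracy hypothesis (1) to control the dimension and hypothesis (2) to make the Nullstellensatz pass-through exact; once that compatibility is in hand, the iterated CRT is routine.
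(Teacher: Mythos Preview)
Your approach is essentially the paper's: build $\omega$ inductively over the primary components via a Chinese-Remainder-style construction, with the key compatibility on overlaps supplied by the vanishing of $dD_1 \wedge \ldots \wedge dD_k$ there, and the passage from geometric vanishing to ideal membership via Nullstellensatz together with the radicality hypothesis~(2). The inductive decomposition $\mathcal Z(J_i)\cap\mathcal Z(I_{i+1}) = \bigcup_{j\le i}(\mathcal Z(I_j)\cap\mathcal Z(I_{i+1}))$ and the check that both $h_i$ and $\beta_{i+1}$ vanish on each piece match the paper's argument exactly.

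One point in your key step needs correction. You argue that the overlap $\mathcal Z(I_i)\cap\mathcal Z(I_j)$ has codimension $>k$ and infer that $dD_1,\ldots,dD_k$ become linearly dependent there. That inference is not valid as stated: the codimension of the overlap says nothing by itself about the rank of the differentials $dD_\ell$, since the $D_\ell$ cut out $\mathcal Z(I)$, not the overlap. The correct reasoning, which the paper uses, is that any point $P$ lying on two distinct irreducible components of $\mathcal Z(I)$ is a \emph{singular} point of $\mathcal Z(I)$; since hypothesis~(1) says every component has dimension exactly $m-k$, the Jacobian criterion forces $\operatorname{rank}(\partial D_i/\partial x_j)|_P < k$, hence $dD_1\wedge\ldots\wedge dD_k|_P = 0$. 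With that fix your sketch is complete and coincides with the paper's proof.
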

\begin{proof}
       We construct $\omega$ explicitly by solving congruence
       equations. Define $v_i=\alpha_i \wedge dD_1 \wedge \ldots
       \wedge dD_k$. First, the ideal $I_1+I_2$'s zero locus is
       $\mathcal Z(I_1+I_2)=\mathcal Z(I_1) \cap Z(I_2)$, which are all
       singular points on the algebraic set $\mathcal Z(I)$. Since
       $\dim \mathcal Z(I_i)=m-k$, the Jacobian $\partial
       D_i/\partial  x_j$'s rank is strictly less than $k$
       on $\mathcal Z(I_1+ I_2)$. In other words, $dD_1 \wedge \ldots
       \wedge  dD_k$ vanishes on $\mathcal Z(I_1+ I_2)$. Hence $v_1-v_2$
       vanishes on $\mathcal Z(I_1+ I_2)$. Then by using Hilbert
       Nullstenllensatz for each component and the condition that $I_1+I_2$
       is radical, $v_1-v_2$ is in $I_1+I_2$, i.e.,
       \begin{equation}
         \label{eq:17}
         v_1-v_2=a_1+a_2, \quad a_1\in I_1,\quad  a_2 \in I_2
       \end{equation}
       Define $v_{12}=v_1-a_1$. Then $[v_{12}]_1=[v_1]_1$ and
       $[v_{12}]_2=[v_2]_2$. Then by induction, we have a differential
       form $v_{1\ldots i}$ such that $[v_{1\ldots i}]_j=[v_j]_j$,
       $\forall 1\leq j\leq i$. The zero locus of $J_i+I_{i+1}$ is,
       \begin{equation}
         \label{eq:19}
         \mathcal Z(J_i + I_{i+1})=\bigcup_{j=1}^i \big(\mathcal Z(I_j) \cap
         \mathcal Z(I_{i+1}) ) .
       \end{equation}
       which are also singular points on the algebraic set $\mathcal
       Z(I)$. Since $[v_{1\ldots i}]_j=[\alpha_j \wedge dD_1 \wedge
       \ldots \wedge dD_k]_j$, $v_{1\ldots i}$  vanishes on $\mathcal Z(I_j) \cap
         \mathcal Z(I_{i+1})$. Hence both   $v_{1\ldots i}$ and
         $v_{i+1}$ vanish on $\mathcal Z(J_i + I_{i+1})$. Then by using
         Hilbert Nullstellensatz, we obtain the differential form
         $v_{1\ldots (i+1)}$. Finally we denote $v_{1\ldots n}=\omega$.
\end{proof}

A large classes of 4D high-loop diagrams satisfy two conditions in
the above proposition. So we can construct $\omega$ for the IBP
without doubled propagator. The proof itself provides the algorithm
for obtaining $\omega$. This algorithm is realized by our Mathematica
and Macaulay2 \cite{M2}
package, {\sc MathematicaM2}. \footnote{This package can be downloaded
  from \url{http://www.nbi.dk/~zhang/MathematicaM2.html}.}



\begin{remark}
\label{factorization}
Note that in practice, after obtaining the differential form $\omega$
which satisfies (\ref{vanishing_id}), there may exist further
simplification. The form $\omega$ may factorize as,
\begin{equation}
  \label{eq:61}
\omega=f \omega'  .
\end{equation}
where $f$ is a polynomial in loop momenta and $\omega'$ is a
polynomial-valued form. If $\omega$ satisfies
(\ref{vanishing_id}), there is no guarantee that $\omega'$ also
satisfies (\ref{vanishing_id}). However, if accidentally $\omega'$
satisfies (\ref{vanishing_id}), we can instead use $\omega'$ to
generate an IBP without doubled propagator. 
\end{remark}

\section{Examples}
\label{examples}

In this section, we demonstrate our method by several $4D$ two-loop
examples. In each case, we generate the $4D$ on-shell part of the IBP
identities by our differential geometry method, via local form and
congruence equations. To simplify the process, we combine integrand
reduction method and our differential geometry approach for IBP computations.

\subsection{Planar double box}
Consider the $4D$ planar double box with $4$ massless legs, $p_1$,
$p_2$, $p_3$ and $p_4$. 
\begin{figure}
\center
  \includegraphics[width=2.8in]{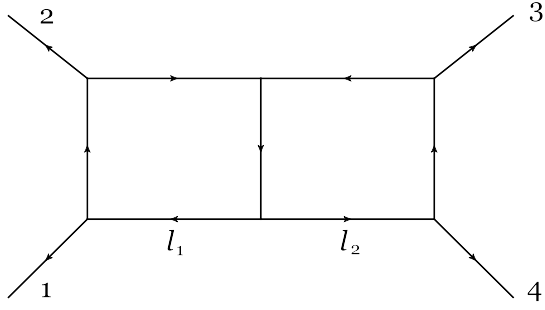}\\
  \caption{Planar double box with $4$ massless legs}\label{dbox}
\end{figure}
The two loop momenta are $l_1$ and
$l_2$. There are $7$ denominators for double box integrals,
\begin{gather}
  \label{eq:21}
  D_1=l_1^2,\quad D_2=(l_1-p_1)^2,\quad D_3=(l_1-p_1-p_2)^2, \nonumber\\
  D_4=(l_2-p_3-p_4)^2,\quad D_5=(l_2-p_4)^2, \quad D_6=l_2^2,\quad D_7=(l_1+l_2)^2.
\end{gather}

Instead of using Minkowski components of $l_1$ and $l_2$, we use van
Neerven-Vermaseren basis,
\begin{eqnarray}
  \label{eq:22}
  x_1&=&l_1 \cdot p_1,\quad  x_2=l_1 \cdot p_2, \quad  x_3=l_1 \cdot p_4
  ,\quad  x_4=l_1 \cdot \omega, \nonumber \\
 y_1&=&l_2 \cdot p_1,\quad  y_2=l_2 \cdot p_2, \quad  y_3=l_2 \cdot p_4
  ,\quad  y_4=l_2 \cdot \omega .
\end{eqnarray}
where $\omega$ is the vector which is perpendicular to all externel
legs and $\omega^2=t u/s$. The denominators have the parity symmetry,
\begin{equation}
  \label{dbox_parity}
  x_4 \leftrightarrow -x_4, \quad y_4 \leftrightarrow -y_4.
\end{equation}
Define the ideal $I\equiv \langle D_1,
\ldots D_7 \rangle$. The ISPs are $\{x_3,x_4,y_1,y_4\}$. Integrals
with numerators linear in $x_4$ or $y_4$ are spurious, i.e., vanish by
the orthogonal property of $\omega$. 

The $4D$ double box cut has $6$ branches, 
\begin{equation}
  \label{eq:26}
  I=I_1 \cap I_2 \cap I_3 \cap I_4 \cap I_5 \cap I_6,
\end{equation}
where,
\begin{gather}
  \label{eq:28}
 I_1= \langle x_1,-s-2 y_1-2 y_2,s-2 x_2,y_3,x_3,t-2 y_1+2 y_4,2
 x_4-t\rangle,
\\
I_2= \langle y_1,x_1,s+2 y_2,s-2 x_2,y_3,t+2 y_4,-t+2 x_3+2 x_4\rangle,
\\
I_3= \langle x_1,-s-2 y_1-2 y_2,s-2 x_2,y_3,x_3,-t+2 y_1+2 y_4,t+2 x_4\rangle,
\\
I_4= \langle y_1,x_1,s+2 y_2,s-2 x_2,y_3,2 y_4-t,t-2 x_3+2 x_4\rangle,
\\
I_5= \langle x_1,s+2 y_1+2 y_2,s-2 x_2,y_3,-s t+2 s x_3+2 s y_1+4 x_3 y_1,\nonumber\\ t-2 y_1+2 y_4,t-2 x_3+2 x_4\rangle,
\\
I_6= \langle x_1,s+2 y_1+2 y_2,s-2 x_2,y_3,-s t+2 s x_3+2 s y_1+4 x_3
  y_1,\nonumber\\-t+2 y_1+2 y_4,-t+2 x_3+2 x_4\rangle
\end{gather}
Note that under the parity symmetry (\ref{dbox_parity}), the primary
ideals are permuted,
\begin{equation}
  \label{eq:35}
  I_1 \leftrightarrow I_3,\quad I_2 \leftrightarrow I_4,\quad I_5
  \leftrightarrow I_6
\end{equation}

We can first carry out the integrand reduction for double-box
numerators. The irreducible numerator terms have the form,
\begin{equation}
  \label{eq:23}
  x_3^m y_1^n x_4^a y_4^b.
\end{equation}
The renormalizability condition requires that $0\leq m+a\leq 4, 0\leq n+b \leq 4,
0 \leq m+n+a+b \leq 6$. Furthermore, the Gr\"obner basis and
polynomial division method \footnote{The package for integrand reduction can be downloaded
  from \url{http://www.nbi.dk/~zhang/BasisDet.html}. } \cite{Zhang:2012ce} determines that
, the integrand basis
$\mathcal B=\mathcal B_1 \cup \mathcal B_2 $,
contains $32$ terms,
\begin{equation}
  \label{eq:24}
  \mathcal B_1 =\{x_3^4 y_1,x_3 y_1^4,x_3^4,x_3^3 y_1,x_3 y_1^3,y_1^4,x_3^3,x_3^2 y_1,x_3 y_1^2,y_1^3,x_3^2,x_3 y_1,y_1^2,x_3,y_1,1\}
\end{equation}
and
\begin{gather}
  \label{eq:25}
  \mathcal B_2 =\{x_4,x_3 x_4,x_3^2 x_4,x_3^3 x_4,x_4 y_1,y_4,x_3
    y_4,x_3^2 y_4,x_3^3 y_4,x_3^4 y_4,y_1 y_4,x_3 y_1 y_4,y_1^2
    y_4,\nonumber \\x_3 y_1^2 y_4,y_1^3
   y_4,x_3 y_1^3 y_4\}.
\end{gather}
Note all terms in $\mathcal B_2$ are spurious. So we focus on further
reducing the $16$ terms in $\mathcal B_1$ via IBPs. We divide our
algorithm in several steps,
\begin{enumerate}
\item Evaluate $\Omega=dD_1 \wedge \ldots \wedge dD_7$ and the local
  forms $[\Omega]_i$. Direct computation gives,
  \begin{gather}
    \label{eq:30}
 \Omega=  \frac{128 s}{t^3 (s+t)^3}\bigg( (s (x_4 (y_1+y_3)-y_4 (x_1+x_3))+t (y_4 (x_2-x_1)+x_4 (y_1-y_2)))
    \nonumber\\(s (y_1+y_3)+t (y_1+y_2+2 y_3)) dx_1\wedge dx_2\wedge
    dx_3\wedge dx_4\wedge dy_1\wedge dy_2\wedge dy_3\nonumber\\+s y_4
    (s (y_4 (x_1+x_3)-x_4 (y_1+y_3))+t (y_4 (x_1-x_2)+x_4
    (y_2-y_1)))\nonumber\\ dx_1\wedge dx_2\wedge dx_3\wedge dx_4\wedge
    dy_1\wedge dy_3\wedge dy_4 \nonumber\\+s y_4 (s (y_4 (x_1+x_3)-x_4
    (y_1+y_3))+t (y_4 (x_1-x_2)+x_4 (y_2-y_1))) \nonumber \\dx_1\wedge
    dx_2\wedge dx_3\wedge dx_4\wedge dy_2\wedge dy_3\wedge dy_4\nonumber\\- (s
    (y_4 (x_1+x_3)-x_4 (y_1+y_3))+t (y_4 (x_1+x_2+2 x_3)-x_4
    (y_1+y_2+2 y_3))) \nonumber\\(s (x_1+x_3)+t (x_1-x_2)) dx_1\wedge
    dx_2\wedge dx_3\wedge dy_1\wedge dy_2\wedge dy_3\wedge
    dy_4\nonumber\\-s x_4 (s (x_4 (y_1+y_3)-y_4 (x_1+x_3))+t (x_4
    (y_1+y_2+2 y_3)-y_4 (x_1+x_2+2 x_3)))\nonumber \\dx_1\wedge
    dx_2\wedge dx_4\wedge dy_1\wedge dy_2\wedge dy_3\wedge dy_4 \bigg).
  \end{gather}
The canonical representative of $[\Omega]_i$ is
  obtained by polynomial division.  For example, on the first branch,
  \begin{gather}
    \label{eq:29}
    [\Omega]_1=-\frac{64 s^2 y_1 (t-2 y_1)}{t^2 \
(s+t)^2} (dx_1\wedge dx_2\wedge dx_3\wedge \
dx_4\wedge dy_1\wedge dy_2\wedge dy_3\nonumber\\-dx_1\wedge dx_2\wedge \
dx_3\wedge dx_4\wedge dy_1\wedge dy_3\wedge dy_4-dx_1\wedge \
dx_2\wedge dx_3\wedge dx_4\wedge dy_2\wedge dy_3\wedge dy_4).
  \end{gather}
\item Verify that the two conditions in Theorem \ref{congruence} hold. In this
  case, $k=7$ and $m=DL=8$, so $m-k=1$. On the other hand, all six
  branches are one-dimensional. Furthermore, define $J_i=\cap_{j=1}^i
  I_i$. Directly commutative algebra computations indicate that
  $J_i+I_{i+1}$ is radical, for $i=1,2,3,4,5$.
\item Solve the congruence equations in the polynomial ring. Let
  $\eta_i$, $i=1,\ldots, 6$ be 7-forms satisfy the following equations,
  \begin{equation}
    \label{eq:31}
  \left\{ 
  \begin{array}{l l}
    \ [\eta_i]_j=[\Omega]_j & \quad j=i\\
    \ [\eta_i]_j=0 & \quad j\not=i, \quad j=1,\ldots, 6
  \end{array} \right.
  \end{equation}
The solution for $\eta_i$'s can be quickly obtained by our package
{\sc MathematicaM2}. For example,
\begin{gather}
  \label{eq:32}
\eta_1=
  -\frac{16 s (s (t (x_4+2 y_1+y_4)-2 (x_3 (2 y_1+y_4)+y_1 (x_4+2
    (y_1+y_4))))-8 x_3 y_1 (y_1+y_4)) }{t^2 (s+t)^2} \nonumber
  \\(dx_1\wedge dx_2\wedge dx_3\wedge dx_4\wedge dy_1\wedge dy_2\wedge
  dy_3-dx_1\wedge dx_2\wedge dx_3\wedge dx_4\wedge dy_1\wedge
  dy_3\wedge dy_4\nonumber \\-dx_1\wedge dx_2\wedge dx_3\wedge
  dx_4\wedge dy_2\wedge dy_3\wedge dy_4) .
\end{gather}
It is easy to check that,
\begin{equation}
  \label{eq:33}
 [\eta_1]_1=[\Omega]_1, \quad  [\eta_1]_2=[\eta_1]_3=[\eta_1]_4=[\eta_1]_5=[\eta_1]_6=0.
\end{equation}
\item Find all the IBP relations generated by $f \eta_j$ according to (\ref{on_shell}), where $f\in B$
  is a term from the integrand basis. For $4D$ double box case, the
  process can be sped up by using the parity symmetry. Define the
  7-forms according to the permutation of primary ideals,
  \begin{equation}
    \label{eq:36}
    v_1=\eta_1+\eta_3,\quad v_2=\eta_2+\eta_4,\quad v_3=\eta_5+\eta_6 
  \end{equation}
Then $v_i$'s, $i=1,2,3$ are even under the parity symmetry. Hence, we
can consider  IBP relations generated by $f v_j$, where $f\in
B_1$. In this way, we avoid the redundancy from spurious terms. 
For
example, explicitly,
\begin{gather}
  \label{eq:37}
v_1=\frac{32s}{t^2(s+t)^2}\bigg(-(s (t (x_4+y_4)-2 (x_3 y_4+x_4 y_1+2 y_1 y_4))-8 x_3 y_1 y_4)\nonumber\\
dx_1\wedge dx_2\wedge dx_3\wedge dx_4\wedge dy_1\wedge dy_2\wedge
dy_3-2 y_1 (s (2 (x_3+y_1)-t)+4 x_3 y_1) \nonumber\\dx_1\wedge
dx_2\wedge dx_3\wedge dx_4\wedge dy_2\wedge dy_3\wedge
dy_4-2 y_1 (s (2 (x_3+y_1)-t)+4 x_3 y_1) \nonumber\\dx_1\wedge
dx_2\wedge dx_3\wedge dx_4\wedge dy_1\wedge dy_3\wedge dy_4 \bigg).
\end{gather}
Consider the form $w=y_1 v_1$. 
\begin{equation}
  \label{eq:27}
  dw=-\frac{32 s y_1 \left(s \left(-5 t+10 x_3+16 y_1\right)+32 x_3
      y_1\right)}{t^2 (s+t)^2} \mathbf m
\end{equation}
Here $\mathbf m$ is the measure, $\mathbf m=dx_1 \wedge
dx_2 \wedge dx_3 \wedge dx_4 \wedge dy_1 \wedge dy_2 \wedge dy_3
\wedge dy_4$. Furthermore, it is clear that $dD_i \wedge \omega=
f_{ij} D_j \mathbf m$. The related components are,
\begin{gather}
  \label{eq:34}
  f_{11}=0, \quad f_{22}=0, \quad f_{33}=0,\\
f_{44}=\frac{16 s y_1 \left(s t^2-2 s t x_3-6 s t y_1-4 s x_3 y_1+8 s y_1^2-16 t x_3 y_1+16 x_3 y_1^2\right)}{t^2 (s+t)^3}\\
f_{55}=\frac{16 s y_1 }{t^3 (s+t)^3} \big(s^2 t^2-2 s^2 t x_3-6 s^2 t y_1-4 s^2 x_3 y_1-8 s^2 y_1^2-16 s t x_3 y_1\nonumber\\-16 s t y_1^2-16 s x_3 y_1^2-32 t x_3
   y_1^2\big)\\
f_{66}=\frac{16 s y_1 \left(s t^2-6 s t x_3-6 s t y_1+4 s x_3 y_1+8 s
    y_1^2-16 t x_3 y_1+16 x_3 y_1^2\right)}{t^3 (s+t)^2}\\
f_{77}=\frac{64 s y_1 \left(s t-s x_3-3 s y_1-4 x_3 y_1\right)}{t^2 (s+t)^2}
\end{gather}
Using (\ref{on_shell}), we get one IBP relation,
\begin{eqnarray}
  \label{eq:40}
  -4 I_{\text{dbox}}[(l_1\cdot p_4)(l_2\cdot p_1)^2] -2 s
  I_{\text{dbox}}[(l_1\cdot p_4)(l_2\cdot p_1)] \nonumber \\-2 s
  I_{\text{dbox}}[(l_2\cdot p_1)^2]+s t I_{\text{dbox}}[(l_2\cdot p_1)]+\ldots=0
\end{eqnarray}
\end{enumerate}
Using this algorithm, we find that both $v_1$ and $v_2$ provide $3$
IBP relations, while $v_3$ provides $6$ IBP relations. These relations
are linearly independent. So our method reduces the number of double box integrals
from $16$ to $16-12=4$. The resulting $4$ integrals can be chosen as 
\begin{equation}
  \label{eq:38}
  I_{\text{dbox}}[1],\quad I_{\text{dbox}}[l_1 \cdot p_4], \quad I_{\text{dbox}}[l_2 \cdot
  p_1], \quad I_{\text{dbox}}[(l_1\cdot p_4)(l_2 \cdot
  p_1)]
\end{equation}
Furthermore, the symmetry of double box determines that,
\begin{equation}
  \label{eq:39}
  I_{\text{dbox}}[l_1 \cdot p_4]= I_{\text{dbox}}[l_2 \cdot
  p_1].
\end{equation}
So we reduce the number of independent integrals to $3$. Our $4D$
formalism misses one IBP relation which can be obtained from the
$D$-dimensional formalism,
\begin{equation}
  \label{dbox_missing}
  I_{\text{dbox}}[(l_1 \cdot p_4)(l_2 \cdot p_1)]=\frac{1}{8} s t
  I_{\text{dbox}}[1] -\frac{3}{4} s I_{\text{dbox}}[l_1\cdot p_4]
  + \ldots .
\end{equation}
This identity occurs in the $O(\epsilon)$-order in a $D$-dimensional IBP
relation. So it cannot be detected by the pure $4D$ IBP
formalism. Including this missing IBP, all integrals for $4D$ double
box are reduced to two master integrals,
\begin{equation}
  \label{eq:41}
  I_{\text{dbox}}[1],\quad I_{\text{dbox}}[l_1 \cdot p_4],
\end{equation}
and we verified that the result is consistent with the $4D$ limit of
the output of \FIRE{}. For example,
\begin{eqnarray}
  \label{eq:42}
  I_{\text{dbox}}[(l_1 \cdot p_4)^2]&=&\frac{t}{2} I_{\text{dbox}}[l_1
  \cdot p_4] + \ldots, \\
  I_{\text{dbox}}[(l_1 \cdot p_4)^3]&=&\frac{t^2}{4} I_{\text{dbox}}[l_1
  \cdot p_4]+ \ldots, \\
 I_{\text{dbox}}[(l_1 \cdot p_4)^4]&=&\frac{t^3}{8} I_{\text{dbox}}[l_1
  \cdot p_4]+ \ldots , \\
 I_{\text{dbox}}[(l_1 \cdot p_4)^2 (l_2 \cdot p_1)]&=&-\frac{s^2
   t}{16} I_{\text{dbox}}[1]+\frac{3s^2}{8} I_{\text{dbox}}[l_1\cdot
 p_4] +
 \ldots ,\\
 I_{\text{dbox}}[(l_1 \cdot p_4)^3 (l_2 \cdot p_1)]&=&\frac{s^3
   t}{32} I_{\text{dbox}}[1]-\frac{3s^3}{16} I_{\text{dbox}}[l_1\cdot p_4]+
 \ldots . 
\end{eqnarray}

\subsubsection{Comparison with GKK method}
It is interesting to see the relation between our method and GKK
method \cite{Gluza:2010ws}. GKK method solves syzygy equations for generating vectors
without doubled propagator. We treat the generating vector $v$ as a dual differential form
$\omega$. On
each branch it is easy to find the local form of $\omega$ and finally
we combine
local forms together by solving congruence equations. So far, our
method is limited to $4D$ and the on-shell part.  

We compare the $4D$ and the on-shell part of the generating vectors
for double box from GKK method. There are three such vectors in
\cite{Gluza:2010ws} for double box with four massless legs, namely 
\begin{equation}
  \label{eq:43}
  v^{(1)}_{\text{GKK}}, \quad  v^{(2)}_{\text{GKK}},\quad  v^{(3)}_{\text{GKK}}
\end{equation}
To compare these with our result, we take the Poincar\'{e} dual of these
vectors, namely $\omega^{(1)}_{\text{GKK}}$, $\omega^{(2)}_{\text{GKK}}$
and $\omega^{(3)}_{\text{GKK}}$. Then we can verify that the on-shell
part is related to our result as,
\begin{eqnarray}
  \label{eq:44}
 \  [\omega^{(1)}_{\text{GKK}}]&=&\frac{t^2(s+t)^2}{64 s^2}
 \big ([\eta_1]+[\eta_2]+[\eta_3]+[\eta_4]-[\eta_5]-[\eta_6] \big) ,\\
\  [\omega^{(2)}_{\text{GKK}}]&=&\frac{t^2(s+t)^2}{64 s}
\big  (-[\eta_1]+[\eta_2]-[\eta_3]+[\eta_4]-[\eta_5]-[\eta_6] \big),\\
\  [\omega^{(3)}_{\text{GKK}}]&=&\frac{t^2(s+t)^2}{64 s}
\big  (\frac{s+2 (l_2\cdot k_1)}{s}[\eta_1]-[\eta_2]+\frac{s+2 (l_2\cdot
    k_1)}{s}[\eta_3]\\
&&-[\eta_4]-\frac{s+2 (l_2\cdot k_1)}{s}[\eta_5]-\frac{s+2 (l_2\cdot k_1)}{s}[\eta_6]\big).
\end{eqnarray}
So on-shell, $\omega^{(i)}_{\text{GKK}}$'s are the linear combination
of the differential form $\eta_i$'s. (The overall factor
$t^2(s+t)^2/(64s)$ comes from the normalization and has no significant
meaning.) The coefficients are the
same for branch pairs (under the parity symmetry), so the spurious
terms drop out in the IBP calculation.

Therefore, our method reproduces the $4D$ on-shell part of the double box result
from GKK. 

\subsection{Non-planar crossed box}
Our method also works for non-planar diagrams. For example, consider the $4D$ crossed box with $4$ massless legs, $p_1$,
$p_2$, $p_3$ and $p_4$. The two loop momenta are $l_1$ and
$l_2$. 
\begin{figure}
\center
  \includegraphics[width=2.8in]{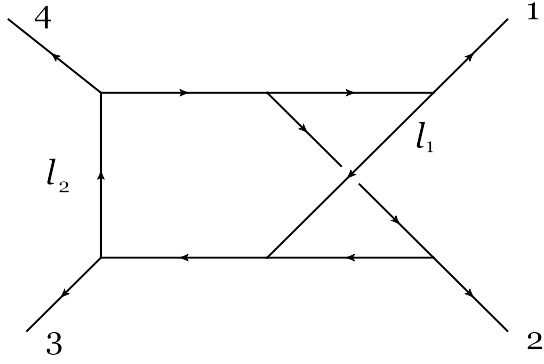}\\
  \caption{Non-planar double box with $4$ massless legs}\label{dbox}
\end{figure}

There are $7$ denominators for crossed box integrals,
\begin{gather}
  \label{eq:21}
  D_1=(l_1+p_1)^2,\quad D_2=l_1^2,\quad D_3=(l_2+p_3)^2, \nonumber\\
  D_4=l_2^2,\quad D_5=(l_2-p_4)^2, \quad D_6=(l_2-l_1+p_2+p_3)^2,\quad D_7=(l_2-l_1+p_3)^2.
\end{gather}
Again we use van
Neerven-Vermaseren basis,
\begin{eqnarray}
  \label{eq:22}
  x_1&=&l_1 \cdot p_1,\quad  x_2=l_1 \cdot p_2, \quad  x_3=l_1 \cdot p_3
  ,\quad  x_4=l_1 \cdot \omega, \nonumber \\
 y_1&=&l_2 \cdot p_1,\quad  y_2=l_2 \cdot p_2, \quad  y_3=l_2 \cdot p_3
  ,\quad  y_4=l_2 \cdot \omega .
\end{eqnarray}
where $\omega$ is the vector which is perpendicular to all externel
legs and $\omega^2=t u/s$. Again, the denominators have the parity symmetry,
\begin{equation}
  \label{xbox_parity}
  x_4 \leftrightarrow -x_4, \quad y_4 \leftrightarrow -y_4.
\end{equation}
Define the ideal $I\equiv \langle D_1,
\ldots D_7 \rangle$. The ISPs are $\{x_3,x_4,y_1,y_4\}$. Integrals
with numerators linear in $x_4$ or $y_4$ are spurious.

This diagram has the following symmetry,
\begin{gather}
  \label{xbox_symmetry}
  l_1 \to l_1-l_2+p_1+p_4, \quad l_2 \to -l_2,\\ 
p_1\to p_2, \quad p_2 \to p_1,\quad p_3 \to p_4,\quad p_4 \to p_3.
\end{gather}

The $4D$ crossed box cut has $8$ branches, 
\begin{equation}
  \label{eq:26}
  I=I_1 \cap I_2 \cap I_3 \cap I_4 \cap I_5 \cap I_6 \cap I_7 \cap I_8 ,
\end{equation}
where,
\begin{gather}
I_1=\langle -t+2 x_2-2 y_2,y_1+y_2,x_1,y_3,x_3+y_2,y_2+y_4,-\frac{t^2}{s}-\frac{2 t y_2}{s}-t+2 x_4\rangle,\\
I_2=\langle -t+2 x_2-2 y_2,y_1+y_2,x_1,y_3,x_3+y_2,y_4-y_2,\frac{t^2}{s}+\frac{2 t y_2}{s}+t+2 x_4\rangle,\\
I_3=\langle t+2 y_2,x_2,2 y_1-t,x_1,y_3,2 y_4-t,x_4-x_3\rangle,\\
I_4=\langle t+2 y_2,x_2,2 y_1-t,x_1,y_3,t+2 y_4,x_3+x_4\rangle,\\
I_5=\langle -t+2 x_2-2 y_2,y_1+y_2,x_1,y_3,x_3,y_2+y_4,\frac{t^2}{s}+y_2 (\frac{2 t}{s}+2)+t+2 x_4\rangle,\\
I_6=\langle -t+2 x_2-2 y_2,y_1+y_2,x_1,y_3,x_3,y_4-y_2,-\frac{t^2}{s}+y_2 (-\frac{2 t}{s}-2)-t+2 x_4\rangle,\\
I_7=\langle s +t+2 y_2,s+2 x_2,-s-t+2 y_1,x_1,y_3,-s-t+2 y_4,-s-t+2 x_3+2 x_4\rangle,\\
I_8=\langle s +t+2 y_2,s+2 x_2,-s-t+2 y_1,x_1,y_3,s+t+2 y_4,s+t-2 x_3+2 x_4\rangle,
\end{gather}
under the parity symmetry (\ref{xbox_parity}), the primary
ideals are permuted,
\begin{equation}
  \label{eq:35}
  I_1 \leftrightarrow I_2,\quad I_3 \leftrightarrow I_4,\quad I_5
  \leftrightarrow I_6 \quad  I_7,
  \leftrightarrow I_8.
\end{equation}

 The irreducible numerator terms have the form,
\begin{equation}
  \label{eq:23}
  x_3^m y_2^n x_4^a y_4^b.
\end{equation}
And the integrand reduction method \cite{Zhang:2012ce} determines that, the integrand
basis $\mathcal B=\mathcal B_1 \cup \mathcal B_2$, where
\begin{equation}
  \label{eq:46}
  \mathcal B_1=\{x_3 y_2^5,y_2^6,x_3^4 y_2,x_3 y_2^4,y_2^5,x_3^4,x_3^3 y_2,x_3 y_2^3,y_2^4,x_3^3,x_3^2 y_2,x_3 y_2^2,y_2^3,x_3^2,x_3
   y_2,y_2^2,x_3,y_2,1\},
\end{equation}
and 
\begin{gather}
  \label{eq:47}
  \mathcal B_2=\{x_4,x_3 x_4,x_3^2 x_4,x_3^3 x_4,x_4 y_2,y_4,x_3
  y_4,x_3^2 y_4,x_3^3 y_4,x_3^4 y_4,y_2 y_4,x_3 y_2 y_4,y_2^2 y_4,x_3
  y_2^2 y_4,y_2^3\nonumber \\ 
   y_4,x_3 y_2^3 y_4,y_2^4 y_4,x_3 y_2^4 y_4,y_2^5 y_4\}.
\end{gather}
There are $19$ terms in $\mathcal B_1$. 

Similarly, Define $\Omega=d D_1 \wedge \ldots d D_7$. By solving congruence equations, we obtain rank-$7$ forms
$\eta_i$, $i=1, \ldots 8$ such that,
\begin{eqnarray}
  \label{eq:48}
  [\eta_i]_j=\delta_{ij} [\Omega]_j,\quad 1\leq i,j \leq 8.
\end{eqnarray}
Again, to remove the spurious terms in $\mathcal B_2$, we define,
 \begin{equation}
    \label{eq:36}
    v_1=\eta_1+\eta_3,\quad v_2=\eta_2+\eta_4,\quad
    v_3=\eta_5+\eta_6,\quad v_4=\eta_7+\eta_8 .
  \end{equation}
We find that both $v_1$ and $v_3$ generate $4$ IBPs, while $v_2$ and
$v_4$ generate $3$ IBPs. Again these IBPs are linearly independent, so
our method generates $14$ relations. 

Furthermore, from the symmetry (\ref{xbox_symmetry}), we have,
\begin{eqnarray}
  \label{eq:50}
  2 I_{\text{xbox}}[l_1\cdot p_3] + I_{\text{xbox}}[l_2 \cdot p_2]&=&0+
  \ldots ,\\
   2 I_{\text{xbox}}[(l_1\cdot p_3)(l_2 \cdot p_2)] + I_{\text{xbox}}[(l_2 \cdot p_2)^2]&=&0+
  \ldots .
\end{eqnarray}
These $2$ relations are independent of the $14$ IBP relations we
obtained. Using these relations, we reduce the $19$ terms in $\mathcal
B_1$ to $3$ terms,
\begin{eqnarray}
  \label{eq:51}
  I_{\text{xbox}}[1],\quad I_{\text{xbox}}[l_1\cdot p_3], \quad I_{\text{xbox}}[(l_1\cdot p_3)(l_2\cdot p_2)] .
\end{eqnarray}

Again, there is one IBP relation missing in the pure $4D$
formalism. From FIRE \cite{FIRE}, we have,
\begin{eqnarray}
  \label{eq:52}
  I_{\text{xbox}}[(l_1\cdot p_3)(l_2\cdot p_2)] =\frac{1}{16} (t + s)
  t I_{\text{xbox}}[1] - \frac{3}{8} (s + 2 t)  I_{\text{xbox}}[l_1\cdot p_3] .
\end{eqnarray}
Combine $14+2+1=17$ relations together, we reduce the integrand terms
to two master integrals,
\begin{eqnarray}
  \label{eq:51}
  I_{\text{xbox}}[1],\quad I_{\text{xbox}}[l_1\cdot p_3]
\end{eqnarray}
For example,
\begin{eqnarray}
  \label{eq:53}
  I_{\text{xbox}}[(l_2\cdot p_2)^2]&=&-\frac{1}{8}  t(s+t) I_{\text{xbox}}[1]+\frac{3}{4} 
  (s+2t) I_{\text{xbox}}[l_1\cdot p_3] +\ldots, \\
 I_{\text{xbox}}[(l_1\cdot p_3)(l_2\cdot p_2)^2]&=&
\frac{-t(s^2+3 s t+2t^2)}{32} I_{\text{xbox}}[1]\nonumber \\ & & +\frac{
  (3s^2+8 s t+8 t^2)}{16} I_{\text{xbox}}[l_1\cdot p_3]  +\ldots,\\
 I_{\text{xbox}}[(l_2\cdot p_2)^3]&=&\frac{t(s^2+3 s t+2 t^2)}{16}
 I_{\text{xbox}}[1]  \nonumber \\ & &-\frac{(3 s^2+8 s t+8 t^2)}{8}
 I_{\text{xbox}}[l_1\cdot p_3] + ... 
\end{eqnarray}

\subsection{Slashed box}
Our method also works for diagram with less than $DL-1$ internal
lines. In these cases, the coefficients $\alpha$'s in (\ref{ansatz})
are not scalar functions, but differential forms. For example,  consider the $4D$ slashed box with $4$ massless legs, $p_1$,
$p_2$, $p_3$ and $p_4$. There are $5$ denominators for slashed box integrals,
\begin{gather}
  \label{eq:21}
  D_1=l_1^2,\quad D_2=(l_1-p_2)^2,\quad D_3=l_2^2, \quad D_4=(l_2-p_4)^2,\quad D_5=(l_1+l_2+p_1)^2,
\end{gather}
\begin{figure}
\center
  \includegraphics[width=2.2in]{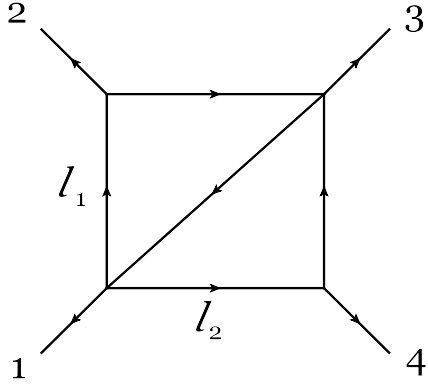}\\
  \caption{Planner slashed box with $4$ massless legs}\label{dbox}
\end{figure}
we use van
Neerven-Vermaseren basis,
\begin{eqnarray}
  \label{eq:22}
  x_1&=&l_1 \cdot p_1,\quad  x_2=l_1 \cdot p_2, \quad  x_3=l_1 \cdot p_4
  ,\quad  x_4=l_1 \cdot \omega, \nonumber \\
 y_1&=&l_2 \cdot p_1,\quad  y_2=l_2 \cdot p_2, \quad  y_3=l_2 \cdot p_4
  ,\quad  y_4=l_2 \cdot \omega .
\end{eqnarray}
where $\omega$ is the vector which is perpendicular to all externel
legs and $\omega^2=t u/s$. The denominators have the parity symmetry,
\begin{equation}
  \label{slashedbox_parity}
  x_4 \leftrightarrow -x_4, \quad y_4 \leftrightarrow -y_4.
\end{equation}
Define the ideal $I\equiv \langle D_1,
\ldots D_7 \rangle$. The ISPs are $\{x_1,x_3,x_4,y_1,y_2, y_4\}$. Integrals
with numerators linear in $x_4$ or $y_4$ are spurious.

The integrand basis for slashed box is $\mathcal B=\mathcal B_1 \cup
\mathcal B_2$ \cite{Zhang:2012ce},
\begin{gather}
  \label{eq:60}
  \mathcal B_1=\{x_3^3 y_2,x_3^3 y_1,x_3^2 y_2^2,x_1 x_3^2 y_2,x_1
  x_3^2 y_1,x_3^2 y_1^2,x_3 y_2^3,x_1 x_3 y_2^2,x_3 y_1 y_2^2,x_1^2
  x_3 y_2,x_1 x_3 y_1 y_2,x_3 y_1^2 y_2,\nonumber \\x_1^2 x_3 y_1,x_1 x_3
   y_1^2,x_3 y_1^3,x_1 y_2^3,x_1^2 y_2^2,x_1 y_1 y_2^2,x_1^3 y_2,x_1^2 y_1 y_2,x_1 y_1^2 y_2,x_1^3 y_1,x_1^2 y_1^2,x_1 y_1^3,x_3^3,x_3^2 y_2,x_1 x_3^2 \nonumber \\,x_3^2 y_1,x_3 y_2^2,x_1 x_3 y_2,x_3 y_1
   y_2,x_1^2 x_3,x_1 x_3 y_1,x_3 y_1^2,y_2^3,x_1 y_2^2,y_1 y_2^2,x_1^2
   y_2,x_1 y_1 y_2,y_1^2 y_2,x_1^3,x_1^2 y_1,x_1 y_1^2,\nonumber \\\ y_1^3,x_3^2,x_3 y_2,x_1 x_3,x_3 y_1,y_2^2,x_1 y_2,y_1 y_2,x_1^2,x_1
   y_1,y_1^2,x_3,y_2,x_1,y_1,1\},
\end{gather}
and
\begin{gather}
  \label{eq:63}
  \mathcal B_2=\{x_4,x_1 x_4,x_1^2 x_4,x_3 x_4,x_1 x_3 x_4,x_3^2
  x_4,x_4 y_1,x_1 x_4 y_1,x_1^2 x_4 y_1,x_3 x_4 y_1,x_1 x_3 x_4
  y_1,x_3^2 x_4 y_1, \nonumber \\x_4 y_1^2,x_1 x_4 y_1^2,x_4 y_1^3,x_4 y_2,x_1 x_4
   y_2, x_1^2 x_4 y_2,x_4 y_1 y_2,x_1 x_4 y_1 y_2,x_4 y_1^2
   y_2,y_4,x_1 y_4,x_1^2 y_4,x_1^3 y_4,x_3 y_4 \nonumber \\,x_1 x_3 y_4,x_1^2 x_3 y_4,x_3^2 y_4,x_1 x_3^2 y_4,x_3^3 y_4,y_1 y_4,x_1 y_1 y_4,x_1^2
   y_1 y_4,x_3 y_1 y_4,x_1 x_3 y_1 y_4,x_3^2 y_1 y_4,y_1^2 y_4,x_1 y_1^2 y_4,\nonumber \\x_3 y_1^2 y_4,y_2 y_4,x_1 y_2 y_4,x_1^2 y_2 y_4,x_3 y_2 y_4,x_1 x_3 y_2 y_4,x_3^2 y_2 y_4,y_1 y_2 y_4,x_1 y_1 y_2
   y_4,\nonumber \\x_3 y_1 y_2 y_4,y_2^2 y_4,x_1 y_2^2 y_4,x_3 y_2^2 y_4\}.
\end{gather}
There are $59$ terms in $\mathcal B_1$ and $52$ terms in $\mathcal
B_2$. Terms in $\mathcal B_2$ are all spurious.

This diagram has the following symmetry,
  \begin{gather}
  \label{slashed_box_symmetry}
  l_1 \to -l_2+p_4, \quad l_2 \to -l_1+p_2,\\ 
p_1\to p_3, \quad p_2 \to p_4,\quad p_3 \to p_1,\quad p_4 \to p_2.
\end{gather}

The $4D$ crossed box cut has $4$ branches, 
\begin{equation}
  \label{eq:26}
  I=I_1 \cap I_2 \cap I_3 \cap I_4 ,
\end{equation}
where,
\begin{eqnarray}
  \label{eq:45}
I_1&=& \{x_2,y_3,x_1 (-s-t)+y_1 (-s-t)+2 x_3 y_2,y_1
(-\frac{t}{s}-1)-\frac{t y_2}{s}+y_4,\nonumber \\
&&x_1 (-\frac{t}{s}-1)-x_3+x_4\},\\
I_2&=& \{x_2,y_3,x_1 (-s-t)+y_1 (-s-t)+2 x_3 y_2,y_1
(\frac{t}{s}+1)+\frac{t y_2}{s}+y_4,\nonumber \\
&&x_1 (\frac{t}{s}+1)+x_3+x_4\},\\
I_3&=&\{x_2,y_3,x_1 y_1 (\frac{2 t}{s}+2)+\frac{2 t x_1 y_2}{s}+t
x_1+t y_1+2 x_3 y_1,y_1 (\frac{t}{s}+1)+\frac{t y_2}{s}+y_4, \nonumber
\\ & & x_1 (-\frac{t}{s}-1)-x_3+x_4\},\\
I_4&=&\{x_2,y_3,x_1 y_1 (\frac{2 t}{s}+2)+\frac{2 t x_1 y_2}{s}+t
x_1+t y_1+2 x_3 y_1,y_1 (-\frac{t}{s}-1)-\frac{t y_2}{s}+y_4,
\nonumber \\
&&x_1 (\frac{t}{s}+1)+x_3+x_4\}
\end{eqnarray}
Under the parity symmetry, the ideals are permuted as,
\begin{equation}
  \label{eq:49}
  I_1 \leftrightarrow I_2, \quad I_3 \leftrightarrow I_4.
\end{equation}
We have $5$ denominators, so $\alpha_i$'s in (\ref{ansatz}) are rank-2
differential forms. We use a basis for all possible rank-2
differential form,
\begin{gather}
  \label{eq:55}
  \alpha^{(1)} = dx_1 \wedge dx_3,\quad \alpha^{(2)}=dx_1 \wedge d
  y_1,\quad \alpha^{(3)}=dx_1 \wedge dy_2, \quad \alpha^{(4)}=d x_3
  \wedge d y_1, \nonumber \\
\alpha^{(5)} =dx_3 \wedge dy_2,\quad \alpha^{(6)} =dy_1 \wedge dy_2,
\quad \alpha^{(7)}=dx_4 \wedge dy_4, \quad \alpha^{(8)}=dx_1 \wedge
dx_4 \nonumber \\
\alpha^{(9)}=dx_3 \wedge dx_4, \quad \alpha^{(10)}=dy_1 \wedge dx_4,
\quad \alpha^{(11)}=dy_2 \wedge dx_4, \quad \alpha^{(12)}=dx_1 \wedge
dy_4\nonumber \\
\alpha^{(13)}=dx_3 \wedge d y_4, \quad \alpha^{(14)}=dy_1 \wedge dy_4,
\quad
\alpha^{(15)}=dy_2 \wedge dy_4
\end{gather}
Note that all components in $dD_1 \wedge \ldots \wedge dD_5$ contains
$d x_2 \wedge d y_3$. So we do not list rank-$2$ forms containing $d x_2$
or $d y_3$. Now we define,
\begin{gather}
  \label{eq:57}
  \Omega^{(i)}=\alpha^{(i)} \wedge dD_1 \wedge \ldots \wedge dD_5,
  \quad 1 \leq i \leq 15
\end{gather}
Then we solve congruence equations to get $60$ 7-forms,
$\omega^{(i)}_j$, $1 \leq i \leq 15$, $1 \leq j \leq 4$, such that, 
\begin{eqnarray}
  \label{eq:64}
  [\omega^{(i)}_j]_k= \delta_{jk} [\Omega^{(i)}]_k.
\end{eqnarray}

We can use $\omega^{(i)}_j$'s to generate on-shell IBPs without
doubled propagator. Again, to remove spurious terms, we define
\begin{eqnarray}
  \label{eq:65}
  v_{2 i-1} &=& \omega^{(i)}_1 + \omega^{(i)}_2 \nonumber \\
  v_{2 i} &=& \omega^{(i)}_3+ \omega^{(i)}_4, \quad 1 \leq i \leq 15
\end{eqnarray}
Then all $v_i$'s are parity-even and we can use $f v_i$, $f \in
\mathcal B_1$, to generate IBP relations. 

However, the new feature for this diagram is that,
we can use Remark. \ref{factorization} to simplify the differential
form and get more IBPs. For example, 
\begin{equation}
  \label{eq:67}
  v_{13}=-\frac{16 \left(s \left(t \left(x_1+y_1\right)+2 \left(x_1+x_3\right) y_1\right)+2 t x_1 \left(y_1+y_2\right)\right)}{s^2 t^2 (s+t)}\tilde v_{13},
\end{equation}
where, 
\begin{gather}
  \label{eq:66}
  \tilde v_{13}=(s+t) (s+2 y_2) dx_1\wedge dx_2\wedge dx_3\wedge
  dx_4\wedge dy_1\wedge dy_3\wedge dy_4\nonumber\\+(s+t) (t+2 x_3)
  dx_1\wedge dx_2\wedge dx_4\wedge dy_1\wedge dy_2\wedge dy_3\wedge
  dy_4\nonumber\\+t (s+2 y_2) dx_1\wedge dx_2\wedge dx_3\wedge
  dx_4\wedge dy_2\wedge dy_3\wedge dy_4\nonumber\\-s (t+2 x_3)
  dx_2\wedge dx_3\wedge dx_4\wedge dy_1\wedge dy_2\wedge dy_3\wedge
  dy_4 .
\end{gather}
We can check that 
\begin{equation}
  \label{eq:68}
  [dD_i \wedge \tilde v_{13}]=0,\quad 1\leq i \leq 5 
\end{equation}
So instead, we can use $\tilde v_{13}$ to generate IBPs. In this
manner, we get more IBPs. Similarly, $v_{14}$ factorizes and we can
define a new rank-7 form $\tilde v_{14}$ for IBP generation. Other
$v_i$'s do not have non-trivial factorization. Using all $v_i$
($\tilde v_i$)'s , we get $51$ IBPs. 

Furthermore, $\Omega^{i}$ themselves also have the factorization
property. For example,
\begin{equation}
  \label{eq:69}
  \Omega^{(1)}=-\frac{32 x_4}{t^3 (s+t)^3} \tilde  \omega^{(1)},
\end{equation}
where,
\begin{gather}
  \label{eq:70}
  \tilde \Omega^{(1)}=-s (s+t) dx_1\wedge dx_2\wedge dx_3\wedge dx_4\wedge dy_1\wedge
  dy_3\wedge dy_4 \nonumber \\ (s (y_4 (t+x_1+x_3)-x_4 (y_1+y_3))+t (y_4
  (x_1+x_2)-x_4 (y_1+y_2))) \nonumber \\+s t dx_1\wedge dx_2\wedge
  dx_3\wedge dx_4\wedge dy_2\wedge dy_3\wedge dy_4 \nonumber \\(s (y_4
  (x_3-x_1)+x_4 (y_1-y_3))+t (x_4 (y_1+y_2)-y_4 (x_1+x_2))) \nonumber
  \\-t (s+t) (s (t (y_1-y_3)-2 x_1 y_3+2 x_3 y_1)+t (t (y_1+y_2)+2
  (x_3 (y_1+y_2)-y_3 (x_1+x_2)))) \nonumber \\ dx_1\wedge dx_2\wedge
  dx_3\wedge dx_4\wedge dy_1\wedge dy_2\wedge dy_3 .
\end{gather}
We can verify that,
\begin{equation}
  \label{eq:68}
  [dD_i \wedge \tilde  \Omega^{(1)}]=0,\quad 1\leq i \leq 5 
\end{equation}
So we can use $\tilde \Omega^{(1)}$ to generate IBPs. Similarly,
$\Omega^{(6)}$, $\Omega^{(8)}$, $\Omega^{(9)}$,
$\Omega^{(14)}$ and $\Omega^{(15)}$ also factorize. Using $\tilde
\Omega$ forms, we get $4$ more independent IBPs.

 Note that although $\Omega^{(1)}$ itself has the form $\alpha\wedge dD_1 \wedge \ldots \wedge
dD_5$, where $\alpha$ is a polynomial-valued differential
form. However, $\tilde \Omega^{(1)}$ cannot be expressed as a product
of 
polynomial-valued form and $dD_1 \wedge \ldots \wedge
dD_5$. So $\tilde \Omega^{(1)}$ does not satisfy the conditions in
Theorem. \ref{congruence} and there is no way to solve the
congruence equation,
\begin{gather}
  \label{eq:71}
  [\tilde \Omega^{(1)}_j]_k = \delta_{jk} [\tilde
  \Omega^{(1)}]_k,\quad 1\leq k \leq 4
\end{gather}
to get more differential forms. 

In summary, from differential forms, we get $51+4=55$ IBP
relations. Furthermore, using the symmetry condition
(\ref{slashed_box_symmetry}),
we have,
\begin{equation}
  \label{eq:72}
  I_{\text{slashed}}[l_2 \cdot p_1]=-I_{\text{slashed}}[l_1 \cdot
  p_3]+\frac{t}{2} I_{\text{slashed}}[1].
\end{equation}
So we have $59-55-1=3$ integrals left, 
\begin{equation}
  \label{eq:73}
  I_{\text{slashed}}[1],\quad I_{\text{slashed}}[l_1\cdot p_1] ,\quad I_{\text{slashed}}[(l_1\cdot p_1)^2]
\end{equation}
From FIRE \cite{FIRE}, there are two missing IBPs,
\begin{eqnarray}
  \label{eq:74}
  I_{\text{slashed}}[l_1\cdot p_1] = -\frac{s t}{2 u}
  I_{\text{slashed}}[1], \\
  I_{\text{slashed}}[(l_1\cdot p_1)^2] = \frac{s^2 t^2}{4 u^2}
  I_{\text{slashed}}[1] .
\end{eqnarray}
So the $59$ integrand terms reduce to $1$ master integral,
$I_{\text{slashed}}[1]$. For example,
\begin{eqnarray}
  \label{eq:75}
   I_{\text{slashed}}[l_1\cdot p_4] &= &-\frac{t}{2}
   I_{\text{slashed}}[1] ,\\
    I_{\text{slashed}}[(l_1\cdot p_1) (l_1 \cdot p_4)]&=&\frac{s
      t^2}{4 u}  I_{\text{slashed}}[1] ,\\
     I_{\text{slashed}}[(l_1\cdot p_1) (l_2 \cdot p_1)]&=&\frac{s^2
      t^2}{2 u^2}  I_{\text{slashed}}[1] ,\\
    I_{\text{slashed}}[(l_2\cdot p_1) (l_2 \cdot p_2)]&=&\frac{s^2
      t}{4 u}  I_{\text{slashed}}[1] .
\end{eqnarray}

\subsection{Turtle box}
Now consider the  $4D$ two-loop turtle box with $5$ massless legs, $p_1$,
$p_2$, $p_3$, $p_4$ and $p_5$.  This system is considerably more
difficult than the $4$-point two-loop cases, since the kinematics is complicated.
\begin{figure}
\center
  \includegraphics[width=2.8in]{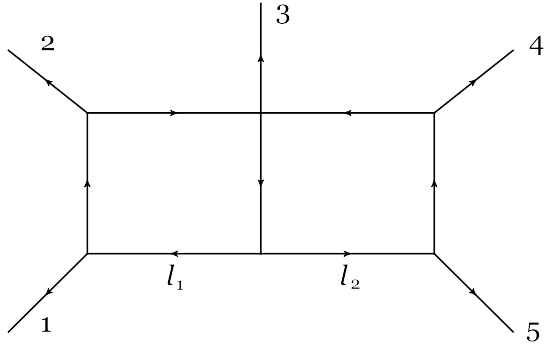}\\
  \caption{Planar double box with $5$ massless legs}\label{dbox}
\end{figure}
The two loop momenta are $l_1$ and
$l_2$. There are $7$ denominators for crossed box integrals,
\begin{gather}
  \label{eq:21}
 D_1=l_1^2,\quad D_2=(l_1-p_1)^2,\quad D_3=(l_1-p_1-p_2)^2, \nonumber\\
  D_4=(l_2-p_5)^2,\quad D_5=(l_2-p_4-p_5)^2, \quad D_6=l_2^2,\quad D_7=(l_1+l_2)^2.
\end{gather}

In this case, we find that it is easier to calculate differential
forms and IBP identity in spinor helicity formalism, and then convert
the result to   van
Neerven-Vermaseren basis in the final step. Define,
\begin{eqnarray}
  \label{eq:54}
  l_1^\mu&=&\alpha_1 p_1^\mu+\alpha_2 p_2^\mu+ \frac{s_{12} \alpha_3}{\langle 14 \rangle [42]} 
  \frac{[1|\gamma^\mu|2\rangle}{2} + \frac{s_{12} \alpha_4}{\langle 24 \rangle [41]} 
  \frac{[2|\gamma^\mu|1\rangle}{2}, \\
 l_2^\mu&=&\beta_1 p_4^\mu+\beta_2 p_5^\mu+ \frac{s_{12} \beta_3}{\langle 41 \rangle [15]} 
  \frac{[4|\gamma^\mu|5\rangle}{2} + \frac{s_{12} \beta_4}{\langle 51 \rangle [14]} 
  \frac{[5|\gamma^\mu|4\rangle}{2}.
\end{eqnarray}
Furthermore, to simplify the computation, we use {\it
  momentum-twistor} variables \cite{Hodges:2009hk, Mason:2009qx}  for $s_{ij}$, $\langle i,j\rangle $ and
$[i,j]$. The advantage is that all constraints like momentum
conservation and Schouten identities are resolved in {\it
  momentum-twistor} variables.

The ISPs are 
\begin{eqnarray}
  \label{eq:22}
  a=l_1 \cdot p_4,\quad  b=l_1 \cdot p_5, \quad  c=l_2 \cdot p_1
  ,\quad  d=l_2 \cdot p_2, 
\end{eqnarray}
The integrand basis contains $32$ terms,
\begin{gather}
  \label{eq:56}
  \mathcal B=\{b^4 c,b^4 d,b c d^3,b d^4,a b^3,b^4,b^3 c,b^3 d,b c d^2,b d^3,c
  d^3,d^4,a b^2,b^3,b^2 c,b^2 d,b c d,b d^2,\nonumber \\
c d^2,d^3,a b,a d,b^2,b
   c,b d,c d,d^2,a,b,c,d,1\},
\end{gather}
Note that for $5$-point kinematics, there exists no vector $\omega$ perpendicular to all
external legs. So it is not obvious to find spurious terms directly from the
integrand basis. However, we have the following identities,
\begin{eqnarray}
  \label{dbox5_parity}
  \int \frac{d^4 l_1}{(2\pi)^2}\frac{d^4
    l_2}{(2\pi)^2}\frac{\epsilon(l_1,l_2,p_1,p_2)g(l_2)}{D_1 \ldots D_7} =0,
  \\
\int \frac{d^4 l_1}{(2\pi)^2}\frac{d^4
    l_2}{(2\pi)^2}\frac{\epsilon(l_2,l_1,p_4,p_5)f(l_1)}{D_1 \ldots D_7} =0,
\end{eqnarray}
because of the parity properties for the sub-diagrams. Here $f(l_1)$
and $g(l_2)$ are arbitrary Lorentz-invariant functions of $l_1$ and
$l_2$, respectively. 

There are $6$ branches for cut solutions,
\begin{equation}
  \label{eq:26}
  I=I_1 \cap I_2 \cap I_3 \cap I_4 \cap I_5 \cap I_6.
\end{equation}
Similarly, Define $\omega=d D_1 \wedge \ldots d D_7$. By solving congruence equations, we obtain rank-$7$ forms
$\eta_i$, $i=1, \ldots 6$ such that,
\begin{eqnarray}
  \label{eq:48}
  [\eta_i]_j=\delta_{ij} [\Omega]_j,\quad 1\leq i,j \leq 6.
\end{eqnarray}
We find that each of the first $4$ differential forms $\eta_1, \ldots, \eta_4$
generates $3$ IBPs, while each of the differential forms $\eta_5$ and
$\eta_6$ generate $4$ IBPs. These relations are linearly independent,
so there are $24$ IBPs in total. Furthermore, the identities
(\ref{dbox5_parity}) provides two more independent identities. So we
have $32-26=6$ integrals left,
\begin{gather}
  \label{eq:58}
  I_{\text{turtle}}[1],\quad I_{\text{turtle}}[l_1 \cdot p_4],\quad
  I_{\text{turtle}}[l_1 \cdot p_5],\quad  I_{\text{turtle}}[l_2 \cdot
  p_1],\nonumber \\ \quad  I_{\text{turtle}}[l_2 \cdot p_2],
\quad  I_{\text{turtle}}[(l_1\cdot p_4)( l_2 \cdot p_2)]
\end{gather}

There is a subtlety for the master integrals of turtle diagram. For
the $D$-dimensional cases, there are $3$ master integrals,
$I_{\text{turtle}}[1]$, $I_{\text{turtle}}[l_1 \cdot p_4]$ and
$I_{\text{turtle}}[l_1 \cdot p_5]$. However, for $D=4$, there are only
$2$ master integral $I_{\text{turtle}}[1]$, $I_{\text{turtle}}[l_1
\cdot p_4]$, because of an integrand reduction relation in
$4D$. Since we start with the $4D$ minimal integrand, this additional
relation is already incorporated.  Then using $4$ additional IBPs from
\FIRE{} \cite{FIRE},
\begin{eqnarray}
  \label{eq:59}
  I_{\text{turtle}}[l_2 \cdot
  p_1]&=& I_{\text{turtle}}[l_1 \cdot
  p_5],\nonumber \\
  I_{\text{turtle}}[l_2 \cdot
  p_2]&=& \frac{s_{25}}{s_{14}} I_{\text{turtle}}[l_1 \cdot
  p_4],\nonumber \\
   I_{\text{turtle}}[(l_1\cdot p_4)(l_2 \cdot
  p_2)]&=&\frac{s_{12}s_{45}}{8}I_{\text{turtle}}[1] +\frac{s_{25}}{4}I_{\text{turtle}}[l_1 \cdot
  p_4]-\frac{s_{24}}{4} I[l_1 \cdot p_5], \nonumber \\
  I_{\text{turtle}}[(l_1\cdot p_5)(l_2 \cdot
  p_2)]&=& \frac{s_{15} s_{25}}{4 s_{14}} I_{\text{turtle}}[l_1 \cdot
  p_4] - \frac{s_{25}}{4} I_{\text{turtle}}[l_1 \cdot
  p_5].
\end{eqnarray}
Including these missing IBP relations, we reduce all integrand terms
to the master integrals $I_{\text{turtle}}[1]$, $I_{\text{turtle}}[l_1
\cdot p_4]$. For example,
\begin{gather}
  \label{eq:62}
   I_{\text{turtle}}[l_1\cdot p_5]=-\frac{4 s_{15} \left(s_{12}+s_{15}-s_{34}\right)}{F}  I_{\text{turtle}}[(l_1\cdot p_4)]
  \nonumber \\  -
  \frac{ s_{15} \left(s_{23} s_{34}+\left(s_{15}-s_{34}\right) s_{45}+s_{12}
   \left(s_{15}-s_{23}+2 s_{45}\right)\right) }{F}
I_{\text{turtle}}[1] +\ldots, \\
 I_{\text{turtle}}[(l_1\cdot p_4)(l_2 \cdot p_1)]=-\frac{1}{2F} s_{15} \big(s_{23} s_{34}+(s_{15}-s_{34})
   s_{45}+s_{12} \left(s_{15}-s_{23}+2 s_{45}\right)\big)
   I_{\text{turtle}}[(l_1\cdot p_4)]  \nonumber \\
-\frac{1}{4F} s_{15}
   \left(s_{15}-s_{23}+s_{45}\right)\big(s_{23}
   s_{34}+\left(s_{15}-s_{34}\right) s_{45}+s_{12} \left(s_{15}-s_{23}+2
   s_{45}\right)\big)   I_{\text{turtle}}[1] \nonumber \\+\ldots, \\
 I_{\text{turtle}}[(l_1\cdot p_4)^2(l_1 \cdot p_5)] = -s_{15} \left(s_{12}+s_{15}-s_{34}\right)
   \left(s_{15}-s_{23}+s_{45}\right){}^2 I_{\text{turtle}}[l_1\cdot
   p_4] \nonumber \\
-\frac{1}{4}
s_{15} \left(s_{15}-s_{23}+s_{45}\right){}^2 \big(s_{23}
   s_{34}+\left(s_{15}-s_{34}\right) s_{45}+s_{12} \left(s_{15}-s_{23}+2
   s_{45}\right)\big) I_{\text{turtle}}[1]  +\ldots ,\\
 I_{\text{turtle}}[(l_1\cdot p_4)(l_2 \cdot p_1)(l_2 \cdot p_2)] =0 + 
 \ldots ,
\end{gather}
where the polynomial $F$ is,
\begin{equation}
  \label{eq:82}
 F= 2 \left(2 s_{15}^2+\left(-2 s_{23}-2 s_{34}+s_{45}\right) s_{15}+s_{12}
   \left(s_{15}-s_{23}\right)+s_{34} \left(s_{23}-s_{45}\right)\right)
\end{equation}

The complete result for $4D$ on-shell turtle box IBPs can be downloaded
  at \url{http://www.nbi.dk/~zhang/IBP/dbox5_IBP_result.nb}.

It is interesting to compare our result to the result from GKK
method \cite{GKK_turtle}. GKK method determines that in
$D=4-2\epsilon$ dimension, there are $15$ IBP generating vectors
$v^{(i)}_\text{GKK}$, $i=1,\ldots 15$, 
without doubled propagator. However, in the $4D$ on-shell limit, we explicitly
verified that on each of the $6$ branches, for all $15$ vectors the dual form
$\omega^{(i)}_\text{GKK}$ is proportional to $\Omega$. Hence, in the $4D$
on-shell limit, the $15$ vectors are generated by our six local forms
$\eta_j$, $j=1,\ldots,6$.

\section{Conclusion}
\label{conclusion}
In this paper, we invent a new method to generate integration-by-part
identities from the viewpoint of differential geometry. The
generating vector for IBP identities are reformulated as differential
forms, via Poincar\'{e} dual. Then by techniques of differential
geometry, the geometric meaning of generating
vectors for
IBPs without doubled propagator is clear: {\it they are dual to
the normal direction of the unitarity-cut solution.}

By using the wedge product and congruence
equations over cut branches, suitable differential forms to generate
IBP without doubled propagator are obtained. Our algorithm is
realized by our computational algebraic geometry package, {\sc
  MathematicaM2}.   

We tested our algorithm on several $4D$ two-loop examples. The
algorithm is very efficient in generating the analytic on-shell part of IBP
identities. For example, our program obtains the analytic on-shell
IBPs of 5-point turtle diagram, in about one hour on our laptop. 

Following our discoveries, there are several interesting future directions,
\begin{itemize}
\item The extension of our formalism to
  $D=(4-2\epsilon)$-dimension. Apparently, the differential forms are
  not directly defined in non-integer dimensions. But we expect that
  this difficulty can be circumvented by considering our formalism in
  various integer-valued dimensions, and then combine the results by
  an analytic continuation. In general, the $D$-dimensional unitarity
  cut solution has a simpler structure than its $4D$ counterpart, so we
  expect that the discussion on the local properties of differential
  forms can be simplified in $D$-dimensional cases.
\item The beyond-on-shell part of IBP. For the purpose of finding the
  contour weights in maximal unitarity \cite{Kosower:2011ty}, the algorithm is enough since
  it aims at the on-shell part. It is interesting to see that how to
  go steps further by releasing the cut constraints recursively.
\item Combination of our differential form method with the classic IBP
  generating algorithm like Laporta. Our method focuses on the IBP
  relations without doubled propagator, while other algorithms can
  recover all the IBP relations. Even before applying the
  sophisticated congruence method, it is straightforward to
  calculate the differential form $\Omega = dD_1 \wedge \ldots \wedge dD_k$ analytically, 
and this form itself generate a lot of IBPs without doubled
propagator. We expect that the ingredients of our method can be
  incorporated current IBP generating programs to speed up the computation.
\end{itemize}

\section*{Acknowledgement}
We thank Simon Badger, Emil
J. Bjerrum-Bohr, Spencer Bloch, Simon Caron-Huot, Poul Damgaard, Hjalte
Frellesvig, Rijun Huang, David Kosower, Kasper Larsen and Mads S\o gaard for useful discussion on this
project.  We express special gratitude to Simon Caron-Huot for
his participance in the early stage of this paper and careful reading
of this paper in the draft stage. We also thank David Kosower and
IPhT, Saclay for the hospitality during YZ's visit. YZ is supported by Danish Council for
Independent Research-Natural Science (FNU) grant 11-107241.

\appendix
\section{Review of mathematical notations}

The denominators $D_1, \ldots D_k$ for a Feynman integral, generates an ideal in the polynomial
ring $R=\mathbb C[x_1, \ldots x_{DL}]$,
\begin{equation}
  \label{eq:1}
  I= \langle D_1, \ldots D_k \rangle .
\end{equation}
The {\it cut solution} is the zero locus of all denominators,
\begin{equation}
  \label{eq:2}
  \mathcal S=\mathcal Z(I)=\{(a_1, \ldots a_{DL})\in \mathbb C^{LD}|D_1(a_1, \ldots a_{DL})=\ldots=D_k(a_1, \ldots a_{DL})=0\}.
\end{equation}
In many cases, the cut solution contains several branches, in
mathematical language, the ideal $I$ has a primary decomposition,
\begin{equation}
  \label{primary decomposition}
  I=I_1 \cap \ldots \cap I_n,
\end{equation}
So correspondingly, the cut solution decomposes into several
irreducible branches,
\begin{equation}
  \label{eq:7}
  \mathcal S=\mathcal S_1 \cup \ldots \cup \mathcal S_n,
\end{equation}
where $\mathcal S_j=\mathcal Z(I_j)$.

By Hilbert's Nullstellensatz, if a polynomial $f$ vanishes everywhere on
$\mathcal Z(I)$, then $f\in \sqrt I$. Here $\sqrt I$ is the radical of
$I$,
\begin{equation}
  \label{eq:5}
  \sqrt I=\{f|f^s\in R, s\in \mathbb N\}.
\end{equation}
$\sqrt I$ is also an ideal and $I\subset \sqrt I$. If $I=\sqrt I$, we
call $I$ a radical ideal. 

The integrand $N$ can be reduced by polynomial division towards the
denominators, via {Gr\"obner basis}
\begin{equation}
  \label{eq:81}
  N= \Delta + \sum_i^k f_i D_i,
\end{equation}
where the remainder $\Delta$, is the {\it integrand basis}. We call
monomials in $\Delta$ {\it irreducible numerators}.

\end{document}